\documentclass[11pt]{article}
    \usepackage[margin=1.05 in]{geometry}
    \usepackage[english]{babel}
    \usepackage{authblk}
    \usepackage{amsmath, amssymb, textcomp, amsthm, mathtools,mathrsfs}
    \usepackage{wasysym}
    \usepackage{stmaryrd}
    \usepackage{dsfont}
    \usepackage{hyperref}
       \usepackage{natbib}
    \usepackage[utf8]{inputenc}
    \usepackage{csquotes}
    \usepackage{footnote,footmisc}
    \usepackage{setspace}
    \usepackage{threeparttable}
    \usepackage{tikz}
    \usepackage{pgfplots}
    \usetikzlibrary{shapes.misc}
    \usetikzlibrary{patterns}
    \usepackage{xcolor}
    \hypersetup{
        colorlinks,
        linkcolor={red!50!black},
        citecolor={blue!50!black},
        urlcolor={blue!80!black}
    }
    \usetikzlibrary{decorations.pathreplacing}
    \usetikzlibrary{arrows}
    \usepackage{float}
    \usepackage[T1]{fontenc}
    \usepackage{lmodern}
    \usepackage{enumitem}
    \newcommand{\GG}[1]{}
    
    \usepackage[toc,page]{appendix}
    
    \newtheorem{theorem}{Theorem}
    \newtheorem{lemma}{Lemma}
    \newtheorem{proposition}{Proposition}
    \newtheorem{definition}{Definition}

    \newtheorem{corollary}{Corollary}
    \newtheorem{assumption}{Assumption}
    
    \newtheorem*{example*}{Example}
    \newtheorem{remark}{Remark}
    \usepackage{graphicx}
    \usepackage{marvosym}
    \newcommand\independent{\protect\mathpalette{\protect\independenT}{\perp}}
    \def\independenT#1#2{\mathrel{\rlap{$#1#2$}\mkern2mu{#1#2}}}
    \DeclareMathOperator*{\argmin}{\arg\!\min}
    \DeclareMathOperator*{\argmax}{\arg\!\max}

    \newtheorem*{assumptions*}{\assumptionnumber}
\providecommand{\assumptionnumber}{}
\makeatletter
\newenvironment{assumptions}[3][]
 {%
  \renewcommand{\assumptionnumber}{Assumption #2#3%
  \if\relax\detokenize{#1}\relax\else\ (#1)\fi}%
  \begin{assumptions*}%
  \protected@edef\@currentlabel{#2#3}%
 }
 {%
  \end{assumptions*}%
 }
\makeatother

\title{A condition for the identification of multivariate models with binary instruments\\[1ex]\LARGE with Corrigendum and Addendum}
\author{Florian F Gunsilius\thanks{Correspondence: ffg@umich.edu. I want to thank the editor Elie Tamer as well as two anonymous referees for very helpful comments that improved the exposition of this article immensely. I also want to thank Ken Chay, Xiaohong Chen, Toru Kitagawa, Arthur Lewbel, Adam McCloskey, Susanne Schennach, and the audience at the 2018 Annual Meeting of the Royal Economic Society for helpful comments. Funding through a MITRE research award from the University of Michigan is gratefully acknowledged. All errors are mine.}}
\affil{Department of Economics, University of Michigan\\ Ann Arbor, MI 48109, U.S.A}
\date{\today}
\begin{document}
\maketitle
\begin{abstract}
This article introduces an empirical condition for the nonparametric point-identification of multivariate instrumental variable models with continuous endogenous variables using binary instruments. Verifying this condition can confirm point-identification in settings in which traditional approaches are not applicable. In particular, it shows that nonlinear instrumental variable models with general heterogeneity can be point-identified with only a binary instrument. This generalizes existing identification results which either restrict the unobserved heterogeneity substantially or require the instrument to have a large support. The main assumption on the instrumental variable model is cyclic monotonicity of its first stage, a multivariate generalization of the classical rank-invariance assumption for univariate models. Asymptotic convergence results for the empirical observable distributions are derived that allow to check the condition in practice. The identification rests on a fixed-set convergence result of cyclically monotone maps between quasi-concave functions. \\
\noindent \emph{JEL subject classification}: C01, C14\\
\noindent  \emph{Keywords}: Cyclic monotonicity; fixed set iteration; instrumental variable; nonseparable model; optimal transportation

\end{abstract}
\section{Introduction}
The first step in a statistical analysis of any economic model is to check whether it is identified, i.e.~whether the important features of the model could in theory be recovered from the observations if we had access to the population \citep{lewbel2019identification}. The arguably most widely used criterion for identification is point-identification, which requires that the parameters of the model can be recovered uniquely from the theoretical population. It has become relatively common, especially in applied economic analyses, to first show nonparametric point-identification of the model of interest before estimating a parametric analogue of the model. This requires a flexible nonparametric point-identification result that can be applied to general economic models with many variables. In many practical settings, for instance in production-function estimation, the endogenous variables or treatments are continuous while researchers only have access to a binary instrument. This article provides a criterion for point-identification of the model in this setting.

We introduce the point-identification result for a general version of instrumental variable models, the ``nonseparable triangular model'', which takes the form
\begin{equation}\label{maineq}
\begin{aligned}
Y&=m(X,\varepsilon)\\
X&=h(Z,U), \quad Z\independent (\varepsilon, U).
\end{aligned}
\end{equation}
The novelty is that the observable variables $Y\in\mathbb{R}^d$ and $X\in\mathbb{R}^k$ can be multivariate with an arbitrary dependence structure while allowing for arbitrarily high but still finite dimensional unobserved heterogeneity $\varepsilon\in\mathbb{R}^l$; in most cases it will be higher dimensional than the outcome of interest $Y$. $X$ is a vector of endogenous regressors which are correlated with the unobserved error term $\varepsilon$. $Z$ is a (binary or discrete) instrumental variable which is required to have an influence on $X$ but itself fully independent of the model.\footnote{The result is straightforward to extend to the discrete setting: there the method has to be applicable for at least two realizations of the instrument.} The independence requirement is captured by the restriction $Z\independent (\varepsilon,U)$, i.e.~that $Z$ be independent of $\varepsilon$ and $U$ jointly. A standard restriction \citep{hoderlein2016erratum} is that the unobservable $U$ has to be of the same dimension as the endogenous variable $X$, i.e.~$U\in\mathbb{R}^k$. Note that we do not restrict the distribution of $Y$, it can be continuous, discrete, or mixed; we only require that $P_{X|Z}$, the conditional distribution of $X$ given $Z$, is absolutely continuous with respect to Lebesgue measure.

The main contribution of this article is a condition for the observable data-generating process, provided in Assumption \ref{supportass} below, that guarantees point-identification of $m$ in the case where $Z$ is binary but $X$ is continuous. This condition is a requirement on the intersection of the multivariate conditional cumulative distribution functions (CDF) $F_{X|Z=z}$ for the two realizations $z,z'$ of the binary $Z$, a multivariate generalization of the intersection requirement in \citet{torgovitsky2015identification}. It also requires $F_{X|Z=z}$ to be \emph{quasi-concave} for $z$ and $z'$. This is satisfied by standard distribution functions, for instance all unimodal distributions. Both conditions are made on the observable distribution and hence can serve as a practical check for point-identification of the model \eqref{maineq}.

\section{Assumptions and main result}\label{sec:ass_main}
The goal is to derive a point-identification result for the function $m$. For this, the class of functions $m$ needs to be restricted. Surprisingly, we can allow for a large class of functions $m$, even those where the unobservable $\varepsilon$ is of higher dimension than the observable $Y$. The identification result depends more heavily on the functional form of $h$. As a result, the assumptions on $h$ are significantly stronger than on $m$, in line with the existing literature.

The classical assumption in the literature is a univariate $X$ and $U$ with a function $h$ that is strictly increasing and continuous in $U$ for all $z$ \citep{matzkin2003nonparametric, hoderlein2016erratum, imbens2009identification}. We extend this assumption to the multivariate setting in a natural way: by assuming that $X$ and $U$ are of the same dimension and $h$ is \emph{cyclically monotone}. The following definition is taken from \citet[Definition 2.22]{villani2003topics} and is analogous to the one in \cite{shi2018estimating}.
\begin{definition}[Cyclic monotonicity]\label{cyclicmonotonicity}
A map $T:\mathbb{R}^k\to \mathbb{R}^k$ is cyclically monotone if its graph $\Gamma\coloneqq \{(x,Tx): x\in\mathcal{X}\}$ is cyclically monotone in the sense that for all $m\geq 1$, and for all $(x_1,Tx_1),\ldots,(x_m,Tx_m)\in\Gamma$,
\[\sum_{i=1}^m\|x_i-Tx_i\|_2^2\leq\sum_{i=1}^m\|x_i-Tx_{i-1}\|_2^2,\] with the convention $x_0=x_m$, where $\|\cdot\|_2$ is the standard Euclidean norm.
\end{definition}
Cyclic monotonicity is a natural generalization of univariate monotonicity to vector-valued functions $T:\mathbb{R}^k\to\mathbb{R}^k$, based on a classical result in \citet[\S 24]{rockafellar1997convex}, which states that the graph of the gradient $\nabla \varphi(x)\coloneqq\left(\frac{\partial}{\partial x_j} \varphi(x)\right)_{j=1,\ldots,k}$ of some convex function $\varphi:\mathbb{R}^k\to\mathbb{R}$ is a cyclically monotone set. This is a natural generalization of univariate monotonicity, because the antiderivative $F(x)\coloneqq \int_a^x f(u)du$ of a monotonically increasing function $f:\mathbb{R}\to\mathbb{R}$ is a convex function \citep[Theorem 24.2]{rockafellar1997convex}. In other words, a monotonically increasing function is the derivative of a convex function in the univariate setting.\footnote{Note that cyclic monotonicity monotonicty is stronger than simple multivariate monotonicity, where $T:\mathbb{R}^k\to\mathbb{R}^k$ is monotone if for every $x_0, x_1\in\mathbb{R}^k$ $(x_1-x_0)'(Tx_1-Tx_0)\geq0$, where $A'$ denotes the transpose of a matrix $A$. In fact, this condition of monotonicity corresponds to cycles of length $2$ in the definition of cyclic monotonicity. This implies immediately that every cyclically monotone function is monotone in this sense. In the univariate setting cyclic monotonicity and monotonicity coincide \citep[\S 24]{rockafellar1997convex}.}

Based on this definition we can state the the assumptions on the model \eqref{maineq} and the data-generating process.
\begin{assumption}[Instrument]\label{instrumentass}
$Z$ is binary, i.e.~$\mathcal{Z}=\{z,z'\}$. It is a valid instrument for $X$, i.e.~(i) it generates exogenous variation in $X$ such that $F_{X|Z=z}(x)\neq F_{X|Z=z'}(x)$ for at least one $x\in\mathcal{X}$ and (ii) is independent of $\varepsilon$ and $U$.
\end{assumption}
\begin{assumption}[First stage]\label{mpiso}\mbox{}
\begin{enumerate}
\item The distribution of the unobservable $U$ is absolutely continuous with respect to Lebesgue measure.
\item $h(z,U)$ is cyclically monotone between $U$ and $X$ for $z$.
\item $h(z',u)=u$ for all $u\in\mathcal{U}$ for $z'$.
\end{enumerate}
\end{assumption}

\begin{assumption}[Second stage]\label{mpiso2}\mbox{}
\begin{enumerate}
\item $m(x,\cdot)$ is identified for exogenous $X$, i.e.~for fixed $P_{Y|X=x}$ and $P_\varepsilon$ there exists a \emph{unique} measure preserving map $m(x,\cdot)$ transporting $P_\varepsilon$ to $P_{Y|X=x}$ for all $x\in\mathcal{X}$.
\item $m(\cdot,\varepsilon)$ is uniformly continuous in probability in $X$.
\item The law $P_{\varepsilon|X=x}$ is absolutely continuous with respect to Lebesgue measure for all $x$.
\item The support $\mathcal{E}$ is convex and independent of $X$ and $Z$, i.e.~$\mathcal{E}_{x,z}$ coincides with $\mathcal{E}$ for all $(x,z)\in\mathcal{X}\times\mathcal{Z}$.
\item $m(\bar{x},e)=e$ for all $e\in\mathcal{E}$ and one known $\bar{x}\in\mathcal{X}$.
\end{enumerate}
\end{assumption}
The following is the formal definition of uniform continuity in probability.
\begin{definition}\label{continprop}
A sequence $\{m_n(\cdot)\}_{n\in\mathbb{N}}$ is said to converge uniformly in probability over a set $\mathcal{E}$ to a function $m(\cdot)$ if \citep{newey1991uniform}
\[ \sup_{e\in\mathcal{E}} \|m_n(e)-m(e)\|=o_P(1),\] where $\|\cdot\|$ is some norm.
If for every sequence $\{x_n\}_{n\in\mathbb{N}}\in\mathcal{X}$ which converges to some $x\in\mathcal{X}$ the corresponding sequence $m(x_n,e)$ converges uniformly in probability to $m(x,e)$, we say that $m$ is uniformly continuous in probability in $x$.
\end{definition}
Uniform continuity in probability is significantly weaker than assuming uniform continuity of $m$ in $x$. It is important, because we cannot guarantee full continuity of $m(x,e)$ in $x$ in many applications, but continuity in probability is satisfied. For instance, if $m(x,\varepsilon)$ is defined via optimal transport theory, one can prove (uniform) continuity in probability straightforwardly (see for instance Corollary 5.23 in \citeauthor{villani2008optimal} \citeyear{villani2008optimal}). In terms of our identification result, the difference between continuity in probability and full continuity of $m$ in $x$ is that we will only be able to identify $m$ for almost every $e\in\mathcal{E}$ in the case of continuity in probability instead of every $e\in\mathcal{E}$. But this difference is immaterial in practice.

These assumptions are generalizations of classical univariate assumptions in the literature \citep{matzkin2003nonparametric, hoderlein2016erratum}. The following is the main condition on the observable data-generating process announced in the introduction.
\begin{assumption}[The condition on the data-generating process]\label{supportass}
The distributions $F$ and $G$ are continuously differentiable everywhere and quasi-concave with convex coinciding supports $\mathcal{X}_F\equiv\mathcal{X}_G\equiv\mathcal{X}\subseteq\mathbb{R}^k$. Moreover, one of the following two settings holds:
\begin{itemize}
\item[(i)] $\mathcal{X}$ is bounded above or bounded below.
\item[(ii)] $\mathcal{X}$ is allowed to be unbounded, but $F$ and $G$ intersect in the sense that there exist some quantile-values $\alpha,\beta\in(0,1)$ such that $G(x)>F(x)$ for all $x\in\mathcal{X}$ with $\alpha\leq G(x)<1$ and $G(x')<F(x')$ for all $x'\in\mathcal{X}$ with $0< G(x')\leq\beta$.
\end{itemize}
\end{assumption}
The following provides a formal definition of quasi-concave functions.
\begin{definition}\label{quasi-concavitydef}
A function $F:\mathbb{R}^k\to\mathbb{R}$ is \emph{quasi-concave} if its support $\mathcal{X}$ is convex and if for every $x,x'\in\mathcal{X}$
\[F((1-t)x+tx')\geq\min\{F(x),F(x')\},\qquad t\in[0,1].\] We call $F$ \emph{strictly} quasi-concave if
\[F((1-t)x+tx')>\min\{F(x),F(x')\},\qquad t\in[0,1].\]
\end{definition}
Quasi-concave functions have a long tradition in economic theory as they induce convex isoquants \citep[p.~934]{mas1995microeconomic}. We require quasi-concavity of the cumulative distribution functions $F$ and $G$. Many important cumulative distribution functions are (strictly) quasi-concave like the (multivariate) Normal-, uniform-, or beta-distribution---in particular, (strictly) log-concave distribution functions are (strictly) quasi-concave. For a reference on quasi-concave distributions, we refer to \citet[chapter 4]{prekopa2013stochastic}.

These assumptions lead to the following identification result.
\begin{theorem}\label{maintheorem2}
Let Assumptions \ref{instrumentass}-- \ref{mpiso2} hold for model \eqref{maineq}. The identified set for $m$ is \[\mathbb{I}\coloneqq \{\text{$m$ satisfies Assumption \ref{mpiso2}}:(\varepsilon_m,U)\independent Z\thickspace\text{for all $\varepsilon_m\in m^{-1}(Y,X)$}\}.\] If Assumption \ref{supportass} holds for the observed data-generating process, then $\mathbb{I}$ contains an $(X,\varepsilon)$-almost everywhere unique element $m$.
\end{theorem}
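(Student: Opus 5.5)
The plan is to generalize the univariate ``fixed point iteration'' argument of \citet{torgovitsky2015identification} to the multivariate setting, with cyclic monotonicity standing in for rank invariance. Write $F\coloneqq F_{X|Z=z}$ and $G\coloneqq F_{X|Z=z'}$. By Assumption \ref{mpiso}(iii), $X=U$ when $Z=z'$, so $P_U$ is identified and equals $P_{X|Z=z'}$. By Assumption \ref{mpiso}(i)--(ii) and $Z\independent U$, the map $h(z,\cdot)$ is a cyclically monotone map pushing $P_U$ forward to $P_{X|Z=z}$. McCann's theorem (Brenier uniqueness) then says it is the unique gradient of a convex function with this property, so it is identified from the data. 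Call it $T\coloneqq h(z,\cdot)$.

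\textbf{Step 1 (counterfactual identification).} For any candidate $m$ in $\mathbb{I}$, independence $(\varepsilon,U)\independent Z$ gives
\[
P_{\varepsilon|U=u,Z=z}=P_{\varepsilon|U=u,Z=z'} .
\]
Since $X=T(U)$ under $z$ and $X=U$ under $z'$, this yields $P_{\varepsilon|X=Tx,Z=z}=P_{\varepsilon|X=x,Z=z'}$. Combining this with the uniqueness in Assumption \ref{mpiso2}(i), knowing $m(x,\cdot)$ determines $m(Tx,\cdot)$: the conditional law $P_{Y|X=Tx,Z=z}$ must be the pushforward of the (already pinned down) law of $\varepsilon$ given $X=x,Z=z'$ under $m(Tx,\cdot)$. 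The same reasoning with $T^{-1}$ gives the converse direction.

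\textbf{Step 2 (propagation from the anchor).} Assumption \ref{mpiso2}(v) anchors $m(\bar x,\cdot)$ as the identity. Iterating Step 1, $m$ is identified on the orbit $\{T^n\bar x:n\in\mathbb{Z}\}$, and likewise from any point where $m$ is known.

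\textbf{Step 3 (fixed-set convergence).} This is the main obstacle: showing that the orbits of $T$ accumulate so as to reach every $x\in\mathcal{X}$, so that continuity in probability, Assumption \ref{mpiso2}(ii), extends identification to all of $\mathcal{X}$. I would use quasi-concavity to work on the level sets $\{G\ge c\}$, which are convex. The intersection condition, Assumption \ref{supportass}(ii), or the boundedness in case (i), implies that $T$ moves points strictly ``up'' in the $G$-quantile order in one region and strictly ``down'' in another. By monotonicity of $T$ (cycles of length two), the iterates $T^n x$ are then monotone in the quantile levels and converge to the fixed set $\{x:Tx=x\}$. This fixed set should be contained in the intersection set $\{F=G\}$, and the argument should show that this convergence is uniform enough. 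Starting from $\bar x$ and using both $T$ and $T^{-1}$, the iterates then connect every point of $\mathcal{X}$ to the fixed set, and from the fixed set back to $\bar x$. Along these sequences $x_n\to x$, so $m(x_n,\cdot)\to m(x,\cdot)$ uniformly in probability.

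Pointwise, this only identifies $m(x,e)$ for almost every $e$, which explains the ``$(X,\varepsilon)$-almost everywhere'' qualifier in Theorem \ref{maintheorem2}. Making the quasi-concavity argument rigorous in dimension $k>1$ is the delicate part. Here I would first try to show that $G\circ T^n$ is monotone and bounded along orbits. I would then invoke the convexity of the level sets to rule out cycling around the fixed set.

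\textbf{Step 4 (uniqueness).} Finally, any two elements of $\mathbb{I}$ agree at $\bar x$ and satisfy the same recursion in Step 1. By Steps 2--3 they therefore coincide almost everywhere, which gives uniqueness.
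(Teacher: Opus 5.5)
Your Steps 1--2 reproduce the paper's mechanism; Step 1 is essentially the Corrigendum's rank-transition lemma. Your overall architecture is that of the original proof. The gap is Step 3, which fails in two independent ways.

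First, quasi-concavity and the crossing condition of Assumption 4 give no information about the orbits of $T=h(z,\cdot)$. The reason is that $T_{\#}P_{z'}=P_z$ does not imply $F(Tx)=G(x)$: a Brenier map preserves mass, not CDF values. So there is no reason for $G\circ T^n$ to be monotone, and $\operatorname{Fix}(T)$ need not lie in $\{F=G\}$. For centered bivariate Gaussians, $T$ is linear, $Tu=Au$ with $A$ symmetric positive definite. The point $0$ is always fixed, yet $F(0)-G(0)=\frac{\arcsin\rho_z-\arcsin\rho_{z'}}{2\pi}\neq 0$ whenever the correlations differ. This is exactly the error in the paper's Lemma 1 that the Corrigendum retracts: it replaces Assumption 4 by Assumption 4$'$ and replaces $\{F=G\}$ by $\{x:Tx=x\}$. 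The monotone quantity that does work comes from the Brenier potential, not from $G$. If $T=\nabla\varphi$, then $L(x)=\varphi(x)-\frac12\|x\|^2$ satisfies $L(Tx)-L(x)\geq\frac12\|Tx-x\|^2$ by convexity of $\varphi$. Hence, on a compact support with continuous $T$ (Caffarelli regularity), every accumulation point of an orbit is a fixed point. On unbounded supports orbits need not accumulate at all: a linear $T$ with eigenvalues on both sides of $1$ sends generic points to infinity under both $T$ and $T^{-1}$.

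Second, even with correct dynamics, nothing carries identification ``from the fixed set back to $\bar x$''. Both $T$ and $T^{-1}$ fix every point of $\operatorname{Fix}(T)$. Continuity in probability only yields $q(x_0,\cdot)=q(x^*,\cdot)$ for the accumulation point $x^*$ of $x_0$'s own orbit. So when the fixed set is a continuum (the generic $(k-1)$-dimensional picture of Lemma 2), one normalization at $\bar x$ reaches only the basin attached to $\bar x$. Concretely, let $k=2$ and $h(z,u)=(u_1,g(u_2))$ with $g$ increasing; this is the gradient of $\frac12u_1^2+\int_0^{u_2}g$. Let $\varepsilon=U_1+\eta$ with $\eta\independent(U,Z)$, and $Y=X_1+\varepsilon$. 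Since $X_1=U_1$ under both instrument values, also $Y=2X_1+\eta$ with $(\eta,U)\independent Z$. Both $m(x,e)=x_1+e$ and $m^*(x,e)=2x_1+e$ satisfy Assumption 3 with the normalization at any $\bar x$ with $\bar x_1=0$, and the orbits of $T$ never change $x_1$. Take $U$ with a product law whose marginals are smooth and log-concave on $[-1,1]$, and choose $g$ so that $g(U_2)$ is also log-concave. Then Assumption 4(i) holds. Moreover, here $F(Tx)=G(x)$ holds exactly and $\operatorname{Fix}(T)$ agrees with $\{F=G\}$ in the interior, so this failure does not stem from the first issue. Hence no argument can deliver your Step 4 under Assumptions 1--4 as stated.

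The paper's own Part 3 has the same hole: it passes from ``$q$ is constant along each orbit and at its limit'' to ``$q$ is constant on $\mathcal X$''. The Corrigendum's replacement theorem identifies $m$ only on the basin of a \emph{normalized} fixed-set piece. That is, it imposes a normalization on each connected piece of the fixed set and assumes the basins cover $\mathcal X$ up to a null set.
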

Theorem \ref{maintheorem2} is an identification result in the sense that it does not provide us with a way to estimate the function $m$. It shows that the identified set $\mathbb{I}$ contains a unique element, just as the univariate result \citet{torgovitsky2015identification} and \citet{d2015identification}. Also, the definition of the identified set is slightly more general than the definition in \citet{torgovitsky2015identification} to account for the fact that $m$ may not be invertible in $\varepsilon$, for instance when $\varepsilon$ is of higher dimension than $Y$. If $m$ is invertible in $\varepsilon$ then
\[\mathbb{I}\coloneqq \{\text{$m$ satisfies Assumption \ref{mpiso2}}:(m^{-1}(X,Y),U)\independent Z\},\] which coincides with the identified set in \citet{torgovitsky2015identification}.

\section{Discussion of the assumptions and applicability}\label{sec:disc}
Assumption \ref{instrumentass} is the standard assumption for a theoretical instrument.  It does not prescribe the strength of the instrument. $Z$ can be lower-dimensional than $X$. Allowing for a lower-dimensional $Z$ is relevant in many applications. For instance one could view $z$ and $z'$ as different markets of a product.

Assumption \ref{mpiso} is also similar to classical assumptions in the identification literature, except for the requirement of cyclic monotonicity and the normalization. The classical identification results in the univariate case starting with \citet{matzkin2003nonparametric} assume univariate $X$ and $U$ and require $h(x,\cdot)$ to be strictly increasing and continuous in $U$. The assumption of cyclic monotonicity implies that $U$ is of the same dimension as $X$, which has been shown to be a necessary assumption for point-identification in nonseparable triangular models \citep{hoderlein2016erratum}. Part 3 is a normalization analogous to \citet{matzkin2003nonparametric}. It fixes $h$ to be the identity for one $\bar{z}\in\{z,z'\}$. This assumption is stronger than in the univariate case, which rests on the control function approach and constructs a control function $V$ that ``mimics'' the behavior of $U$. Such a control-function approach is unavailable in the multivariate setting since the full ordering of the real line is lost in higher dimensions.

Assumptions \ref{mpiso2} (2) - (5) are regularity assumptions on the model in analogy to the univariate identification result in \citet{torgovitsky2015identification}. Assumption \ref{mpiso2} (1) is not tautological. It simply requires that without the endogeneity problem, i.e.~if $X$ were exogenous, $m$ could be point-identified via uniqueness. There are many assumptions in the literature to guarantee this (in particular all possible variations in the literature of Lemma 1 in \citet{matzkin2003nonparametric}) some of which we show below. In particular, $m(x,\cdot)$ does not have to be invertible in $\varepsilon$ for identification, which seems to provide a novel result for these general models. In this respect, our identification result can be seen as a ``dissection'' of the point-identification argument into identification of $m$ in the exogenous setting and the endogenous case: if $m$ is identifiable in the exogenous setting then our approach shows that it can be identifiable in the endogenous setting too. Below we provide some sufficient conditions when $Y$ is continuous, but other conditions can be derived for binary or discrete settings.

\subsection{When Theorem \ref{maintheorem2} is applicable}
Model \eqref{maineq} is a generalization of the classical univariate one \citep{torgovitsky2015identification, d2015identification}: in the case where $X$ and $U$ are univariate, strict cyclic monotonicity of $h$ reduces to a strictly increasing and continuous function in $U$. Moreover, requiring $m$ to be strictly increasing and continuous in $\varepsilon$ makes it satisfy Assumption \ref{mpiso2} (1) under a normalization \citep[Lemma 1]{matzkin2003nonparametric}. Therefore, the proposed result is applicable in any setting where the first or second stage is univariate under the standard monotonicity assumptions, but it is applicable in more general settings that are economically relevant.

\paragraph{Economic conditions for $m(X,\varepsilon)$.} Assumption \ref{mpiso2} (1) only requires uniqueness (potentially under a normalization) of $m$ \emph{if $X$ were exogenous}. The classical assumption of a strictly increasing and continuous function is a nonparametric assumption that follows from the theory of optimal transportation: given probability measures $P_\varepsilon$ and $P_Y$ on domains $\mathcal{E}$ and $\mathcal{Y}$ and a surplus function $s: \mathcal{E}\times\mathcal{Y}\to \mathbb{R}$, the Monge problem consists of finding a deterministic map $m(\varepsilon)$ that maximizes
\[\int_{\mathcal{E}} s(\varepsilon,m(\varepsilon)) \medspace dP_\varepsilon(\varepsilon)\] such that it maps $P_\varepsilon$ to $P_Y$. This problem has a solution if $P_\varepsilon$ possesses a density with respect to Lebesgue measure, which we assume. Then it provides an optimal deterministic matching $m(\varepsilon)$ under maximal average surplus as measured by $s(\varepsilon, y)$ and has found applications in a variety of economic settings \citep{galichon2021unreasonable}. Different choices of surplus functions $s(\varepsilon,y)$ produce matchings with different properties: for instance, if $s(\varepsilon,y)$ is the negative Euclidean distance $-|\varepsilon - y|^2$ then the optimal match $m(\varepsilon)$ is cyclically monotone. In the univariate setting, it produces the strictly increasing and continuous function $m(\varepsilon)$ from the classical setting. This idea has been used recently in \citep{chernozhukov2016monge}. These are just two examples. More general nonparametric functional forms are possible by choosing different surplus functions. The following are two examples.\\
(i) The natural generalization of the univariate setting is to assume $Y$ and $\varepsilon$ are multivariate with absolutely continuous distributions and $\mathcal{E}$ and $\mathcal{Y}$ are \emph{of the same dimension}. Then under a regularity assumption on the surplus function $m(x,\cdot)$ will be the \emph{unique} generalized cyclically monotone map solving the Monge problem \citep{villani2003topics, chernozhukov2014single}. This uniqueness makes it identifiable under a normalization like in Assumption \ref{mpiso2} (5), see the analysis in \citet{chernozhukov2014single}. \\
(ii) This idea has been extended to the setting where the unobservable $\varepsilon$ is higher-dimensional than the observable outcome $Y$ in \citet{chiappori2015multi, chiappori2019multi} and in particular \citet{mccann2018optimal}.  Under regularity assumptions on the surplus function $s(\varepsilon,y)$, the optimal map $m(x,\varepsilon)$ solving the Monge problem is the \emph{unique} transport map between $P_\varepsilon$ and $P_{Y|X=x}$ \citep[Theorem 1]{mccann2018optimal}. This uniqueness makes $m(x,\varepsilon)$ identifiable under the normalization in Assumption \ref{mpiso2} (5) as required for Assumption \ref{mpiso2} (1).

\paragraph{Economic conditions for $h(Z,U)$.}  One classical argument for nonseparability of the model and monotonicity of $h$ in $U$ is maximization of quasi-linear functions. Consider for instance the example in \citet{imbens2009identification} from the univariate setting. $Y$ is some outcome like life-time earnings or firm revenue and $m(X,\varepsilon)$ is some production function. An agent chooses input $X$ to maximize the expected outcome minus the costs associated to the value of $X$, given her information set. Suppose the information set consists of a (potentially lower dimensional) noisy signal $U$ of the unobserved input $\varepsilon$.\footnote{\citet{imbens2009identification} assume $U$ and $\varepsilon$ are both one dimensional.} Then $X$ is obtained as the optimization problem
\[X = \argmax_x \left[E\left[m(x,\varepsilon) | U,Z\right] - c(x,Z)\right],\] which is a nonseparable first stage $X = h(Z,U)$. If the conditional expectation is linear in both $x$ and $U$ for fixed $z$, then the optimal $X$ will be a cyclically monotone function of $U$ for fixed $z$ by a classical theorem in \citet{rochet1987necessary}. The same argument holds in the univariate setting: if the expectation is linear in both $X$ and $U$ then $h$ will be strictly increasing and $P_U$-almost everywhere continuous in $U$.

This is not the only instance of cyclic monotonicity in economics. A recent example is from multinomial choice models \citep{shi2018estimating}, where the authors show that the ``social surplus function'' \citep{mcfadden1981econometric}, i.e.~the expected utility from a simple additive multinomial choice problem, is cyclically convex. Let $U\coloneqq (U_1,\ldots U_M)$ be the latent utility of an individual choosing between $M$ options, trying to maximize their utility, and let $\varepsilon\coloneqq (\varepsilon_1,\ldots,\varepsilon_M)$ be some idiosyncratic error term. Then the gradient of
\[E\left[\max_{1\leq m\leq M} U_m+\varepsilon_m \vert U= u\right]\] is cyclically monotone \citep[Lemma 2.1]{shi2018estimating} for any realization $u$.

As another example, cyclic monotonicity is naturally linked to demand analysis, in particular the Generalized Axiom of Revealed Preferences (GARP) as introduced in \citet{varian1982nonparametric} and cyclic consistency introduced in \citet{afriat1967construction}, see \citet[chapter 14]{hadjisavvas2006handbook} for an in-depth analysis. One of the earliest approaches in econometrics exploiting cyclic monotonicity between consumption and prices is \citet{browning1989nonparametric} who tests rational expectation hypotheses. More recently, cyclic monotonicity has found uses in the semi-parametric estimation of multinomial choice models \citep{shi2018estimating}, the identification of single-market Hedonic models \citep{chernozhukov2014single}, and the construction of nonlinear analogues of the classical principal component analysis \citep{gunsilius2019independent}.

\subsection{When Theorem \ref{maintheorem2} is not applicable}
The two fundamental assumptions for Theorem \ref{maintheorem2} are (i) absolute continuity of $F_{X|Z}$ and (ii) cyclic monotonicity of $h(x,\cdot)$.

If the conditional distribution $F_{X|Z}$ is not absolutely continuous, then Theorem \ref{maintheorem2} is not applicable. This follows from the multivariate fixed-set method we show below, which requires an intersection of two continuous CDFs. Recently, \citet{feng2019matching} introduced a complementary approach based on a similar idea that is applicable in settings where $F_{X|Z}$ is discrete. A general method that can handle mixed data most likely does not exist: the reason is that the interplay between the distributions $F_{X|Z}$ as the first stage of the model $h$ is crucial for a fixed-set convergence result we show below.

The cyclic monotonicity of $h$ is a natural generalization of the strict monotonicity of $h$ in the univariate case. It is still a restrictive functional assumption. To see this, recall the expected payoff maximization example
\[X = \argmax_x \left[E\left[m(x,\varepsilon) | U,Z\right] - c(x,Z)\right]\] which is
cyclically monotone if the conditional expectation is linear in $X$ and $U$ for the two realizations of $Z$. This is a strong functional form restriction. Allowing for more general and nonlinear functional forms of $E\left[m(x,\varepsilon) | U,Z\right]$ leads to the concept of generalized cyclic monotonicity \citep[chapter 2.4]{villani2003topics}, which has been used in \citet{chernozhukov2014single}. Unfortunately, the proposed fixed-set method on which our identification result relies would need to be adjusted each time for any new form of ``generalized cyclic monotonicity''. So far, there does not seem to be a general approach for a large class of generalized cyclically monotone functions. Our result relies on classical cyclic monotonicity because we show that cyclically monotone maps between quasi-concave distribution functions take the form of metric projections onto convex isoquants. The corresponding iterative procedure can be analyzed and this is the main technical contribution of the article which we now explore.

\section{The underlying idea: multivariate fixed-set iterations}\label{sec:fixedpoint}
\subsection{The idea in the univariate case}
The intuition for identification of the univariate version of model \eqref{maineq} follows the idea from \citet{torgovitsky2015identification}. The univariate model is the one where all variables, observable and unobservable, are univariate and where both $m$ and $h$ satisfy the rank-invariance condition in their second argument, i.e.~they are strictly increasing and continuous functions in $\varepsilon$ and $U$, respectively.

$m(X,\varepsilon)$ in model \eqref{maineq} maps the distribution $F_\varepsilon$ of the unobservable to the conditional \emph{counterfactual} distribution $F_{Y(X)}$ for \emph{exogenous} $X$, i.e.~for an $X$ with $X\independent \varepsilon$. The notation $F_{Y(X)}$ is Rubin's counterfactual notation \citep{rubin1974estimating}: due to the endogeneity problem, i.e.~the fact that $X$ and $\varepsilon$ are not independent, $F_{Y(X)}$ is unobservable and the observable distribution $F_{Y|X}$ does not coincide with $F_{Y(X)}$. In order to identify $m$, we would need to know $F_{Y(X)}$, i.e.~we want to know how the model would behave for a change in $X$ that does not affect the distribution of $\varepsilon$.

The idea is to vary $Z$ in such a way that $X$ varies but $\varepsilon$ stays constant. The fact that this is possible if $Z$ is only binary was first shown in \citet{torgovitsky2015identification} and \citet{d2015identification}.
The argument rests on the fact that using the binary instrument $Z$ with realizations $z$ and $z'$, there are two maps that do not change the distribution $F_\varepsilon$ of $\varepsilon$, but change values of $X$. This hence captures the exogenous effect of $X$ on $Y$. These two maps are depicted in Figure \ref{equilibriumex}.
\begin{figure}[h!t]
\centering
\begin{tikzpicture}
\draw[->, thick] (0,0) to (0,4);
\draw[->,thick] (0,0) to (4,0);
\draw[-,thick] (0,0) to [out=15,in=180] (4,4);
\draw[thick] (0,0) to [out=90,in=220] (4,3.5);
\draw node[right] at (4,0) {$x$};
\draw[-,thick] (0.5,0) to (0.5,-0.1);
\draw node[below] at (0.5,-0.1) {$x_0$};
\draw[-,dashed] (0.5,0.2) to (0.5,1.38);
\draw[-,thick] (0,0.2) to (-0.1,0.2);
\draw node[left] at (-0.1,0.2) {$F_{X|Z=z}(x_0)$};
\draw[-,dashed] (0.52,1.38) to (1.5,1.38);
\draw[-,thick] (1.55,0) to (1.55,-0.1);
\draw node[left] at (-0.1,1.32) {$F_{X|Z=z}(Tx_0)=F_{X|Z=z'}(x_0)$};
\draw node[below] at (1.55,-0.1) {$Tx_0$};
\draw[-,thick] (0,1.32) to (-0.1,1.32);
\draw[-,dashed] (1.55,1.38) to (1.55,2.1);
\draw[-,dashed] (1.55,2.1) to (1.9,2.1);
\draw[-,dashed] (1.9,2.1) to (1.9,2.28);
\draw[-,thick] (0,2.1) to (-0.1,2.1);
\draw node[left] at (-0.1,2.1) {$F_{X|Z=z}(T^2x_0)=F_{X|Z=z'}(Tx_0)$};
\draw node[right] at (1.95,-0.25) {$x^*$};
\draw[-,dotted, thick] (2.05,0) to (2.05,2.4);
\draw[-,thick] (2.05,0) to (2.05,-0.1);
\draw node[right] at (2.3,0.8) {$F_{X|Z=z}$};
\draw[->,thick] (2.3,0.9) to (1.43,1.17);
\draw node[right] at (4.2,2.55) {$F_{X|Z=z'}$};
\draw[->,thick] (4.2,2.7) to (3.45,3);
\end{tikzpicture}
\caption{Fixed-point iteration in a univariate framework.}
\label{equilibriumex}
\end{figure}

The first map changes $z\mapsto z'$ for a fixed $x$, i.e.~switches the distributions $F_{X|Z=z}(x)$ and $F_{X|Z=z'}(x)$ (the ``vertical'' map in Figure \ref{equilibriumex}). The fact that $F_{\varepsilon|X,Z}$ is not affected by this follows from the classical control variable approach \citep{imbens2009identification} and because $Z$ has no effect on $m$ due to the exclusion restriction of $Z$. A control variable $V$ is such that $X \independent \varepsilon| V$ and can be constructed via $V = F_{X|Z}(X)$ \citep{imbens2009identification}. In particular, it contains the same information as the unobservable $U$. Hence, by conditioning on $V$ and the fact that $Z$ is independent of $(\varepsilon, U)$, the vertical shift changes $Z$ but does not change $X$, which therefore does not affect the function $m(x,\varepsilon)$ we want to identify \citep[p.~1192]{torgovitsky2015identification}.

The second map is the change of quantiles (the ``horizontal'' map in Figure \ref{equilibriumex}), which follows by the fact that the change $(x,z')\to (Tx,z)$ is performed in such a way that $F_{X|Z=z'}(x) = F_{X|Z=z}(Tx)$. This is again achieved via the control variable approach by defining $V=F_{X|Z}(X)$ and conditioning on this. This implies that the horizontal map does not affect the distribution of $\varepsilon$ and hence the function $m(\cdot,\varepsilon)$ we want to identify. If we keep alternating between these two maps for a given starting value $x_0$, this sequence $x_0,Tx_0,T(Tx_0),T(T(Tx_0)),\ldots$ will converge to the point $x^*$ where $F_{X|Z=z}$ and $F_{X|Z=z'}$ intersect. For points $x\leq x^*$ we need to iterate $z\mapsto z'$ and for points $z\geq z^*$ we need to iterate $z'\mapsto z$. This allows us to identify the function $m(x,\varepsilon)$ by comparing different points $x$ in this iterative approach to the point $x^*$, because the ``vertical'' and ``horizontal'' map do not change the distribution of $\varepsilon$ and hence keep $m(\cdot,\varepsilon)$ fixed.

\subsection{Measure preserving isomorphisms and multivariate fixed-set iterations}
Extending this idea to the multivariate setting, i.e.~to the setting where $F_{Y|X}$, and in particular $F_{X|Z=z}$ and $F_{X|Z=z'}$ are multivariate, is the main contribution of this article. This multivariate setting allows for arbitrary dependence between the variables, extending the  element-wise generalization of \citet{torgovitsky2015identification} and \citet{d2015identification}, which only works with the marginal distributions of each individual element $X_j$ of the vector $X\coloneqq (X_1, X_2,\ldots, X_k)\in\mathbb{R}^k$. There are mainly two reasons for why the univariate reasoning does not work in a higher dimensional setting. First, the map $(x,z)\mapsto (F_{X|Z=z}(x),z)$ used for generating the control variable is only invertible in the one-dimensional case, and we need to define an analogue in our multivariate setting. Second, a general sequencing argument is more intricate in higher dimensions.

\subsubsection{Generalizing the map $(x,z)\mapsto (F_{X|Z=z}(x),z)$.} We now provide a formal argument for how we solve the first challenge; a rigorous argument using disintegrations is given in the appendix. We need to find a natural generalization of the ``horizontal map'' in Torgovitsky's argument; we show below that this is achieved by any map $h(x,\cdot)$ that is a \emph{measure preserving isomorphism} \citep[Definition 2.1]{einsiedler2013ergodic}. A map $T:A\to B$ transporting a probability measure $P_A$ onto another probability measure $P_B$ is measure-preserving if it is measurable\footnote{Measurability of $T$ means that $\mathscr{A}=T^{-1}\mathscr{B}$, where $\mathscr{A}$ and $\mathscr{B}$ are the $\sigma$-algebras corresponding to $A$ and $B$, respectively. $T^{-1}A$ denotes the set of points $x\in \mathcal{X}$ such that $Tx\in A$.} and
\begin{equation}
P_B(S)=P_A(T^{-1}S)
\end{equation}
for every set $S$ in the $\sigma$-algebra $\mathscr{B}$ corresponding to $Y$. If $T$ is invertible and its inverse is also measure-preserving, it is called a measure-preserving isomorphism. In short, a measure preserving isomorphism is a map that preserves probabilities, is invertible, and whose inverse also preserves probabilities.

We now argue that this is the required assumption on $h$ to solve the first challenge. In fact, since $h$ is is a measure preserving isomorphism between $P_U$ and $P_{Y|X=x}$ for all $x$, its inverse $U=h^{-1}(X,Z)$ is measure preserving too, which gives
\begin{equation}\label{equalitiesimportant}
P_{\varepsilon|X=x,Z=z}=P_{\varepsilon|U=h^{-1}(x,z),Z=z}=P_{\varepsilon|U=h^{-1}(x,z)}.
\end{equation}

The second equality in \eqref{equalitiesimportant} follows from the fact that $Z$ is an instrument and that $\varepsilon$ is independent of $Z$ conditional on $U$. The first equality holds by the following reasoning: the map $\phi: (X,Z) \mapsto (h^{-1}(X,Z),Z)$ is an invertible map that preserves probabilities since $z\mapsto z$ is the identity and $x\mapsto h^{-1}(x,z)$ is an optimal transport map preserving probabilities for all $z$, so that for every rectangle $A_x\times A_z\equiv(-\infty,x]\times (-\infty,z]\in\mathscr{B}_{\mathbb{R}^{k+m}}$
\[P_{X,Z}(A_x\times A_z) = P_{U,Z}(\phi^{-1}(A_x\times A_z))\equiv P_{U,Z}(h^{-1}(A_x,z)\times A_z).\] The same thing holds for the map $(\varepsilon,x,z)\mapsto (\varepsilon,h^{-1}(x,z),z)$, so that for every rectangle $A_\varepsilon\times A_x\times A_z\equiv (-\infty,\varepsilon]\times (-\infty,x]\times (-\infty,z] \in\mathscr{B}_{\mathbb{R}^{d+k+m}}$
\[P_{\varepsilon,X,Z}(A_\varepsilon\times A_x\times A_z) = P_{\varepsilon,U,Z}(\phi^{-1}(A_\varepsilon\times A_x\times A_z))=P_{\varepsilon,U,Z}(A_\varepsilon\times h^{-1}(A_x,z)\times A_z).\]
Thus
\[P_{\varepsilon|X,Z}(A_\varepsilon)=\frac{P_{\varepsilon,X,Z}(A_\varepsilon\times A_x\times A_z)}{P_{X,Z}(A_x\times A_z)} =\frac{P_{\varepsilon,U,Z}(A_\varepsilon\times h^{-1}(A_x,z)\times A_z)}{P_{U,Z}(h^{-1}(A_x,z)\times A_z)}=P_{\varepsilon|U,Z}(A_\varepsilon).\] The last thing to notice is that conditioning on measure zero events does not cause issues, because $(X,Z) \mapsto (h^{-1}(X,Z),Z)$ is measurable with measurable inverse by definition of a measure preserving isomorphism $h(z,u)$, so that their $\sigma$-algebras coincide, i.e.~$\sigma(U,Z) = \sigma(X,Z)$.\footnote{Note that this conditioning is different from the approach in \citet{kasy2014instrumental}. Kasy used the mapping $\psi:(X,Z)\mapsto (X,h^{-1}(X,U))$, where he defined the inverse of $h$ is \emph{with respect to $X$}. This makes the composite map not invertible, as it becomes a map from $\mathbb{R}^2$ to $\mathbb{R}\times \mathbb{U}$, where $\mathbb{U}$ is the (in Kasy's case possibly infinite dimensional) metric space containing $U$. Therefore, the respective $\sigma$-algebras $\sigma(X,Z)$ and $\sigma(X,h^{-1}(X,U))$ need not coincide. In our case, however, we use the measure preserving isomorphism $\phi(X,Z)= (h^{-1}(X,Z),Z)$, which is measurable with measurable inverse, so that $\sigma(X,Z)$ and $\sigma(h^{-1}(X,Z),Z)$ coincide. It is exactly here where we make use of the fact that $U$ and $X$ must be of the same dimension.}

This argument shows that a measure preserving isomorphism is the required generalization of the ``horizontal map''. Note that we do not need to make any assumptions on the function of the second stage. In particular, we do not need to make any assumptions on the dimension of $\varepsilon$ or any functional form assumptions on $m$. Also note that we use the random variable $U$ instead of constructing a control variable $V$ as in the univariate setting. The reason is that a control variable can not be generated in multivariate settings because of a lack of a full ordering. Therefore, we invoke Assumption \ref{mpiso} (2) which allows us to condition on $U$ instead.

The proof of Theorem \ref{maintheorem2} shows that the ``vertical map'' from the univariate setting still holds in the multivariate setting under the exclusion restriction on $Z$ and Assumption \ref{mpiso} (2). We also condition on $U$ instead of constructing a control variable $V$. But conditional on $U$, $Z$ is independent of $\varepsilon$, so that changing $z\mapsto z'$ for fixed $x$ does not affect the distribution of $\varepsilon$ and hence the function $m(\cdot,\varepsilon)$ which is what we need for identification.

\subsubsection{Multivariate fixed-set iterations.} The second, and more complicated, challenge to address, is to find a generalization of the classical fixed-point iterations result from the univariate setting. We do this by assuming that $h$ is cyclically monotone in $U$, which in our setting with absolutely continuous distributions will be a measure preserving isomorphism.
The idea is to generalize the fixed-point iteration argument from Figure \eqref{equilibriumex} which has been used in many settings beyond the immediate application in econometric identification, in particular in microeconomics for establishing the existence of equilibria (e.g.~\citeauthor*{hopenhayn1992entry} \citeyear{hopenhayn1992entry}, \citeauthor*{mas1995microeconomic} \citeyear{mas1995microeconomic}). This result could be of independent interest, so we phrase it in a general way.

We consider functions $F(x)$ and $G(x)$ on some set $\mathcal{X}\subset\mathbb{R}^k$, $k\geq2$, which in our case will be cumulative distribution functions on a support $\mathcal{X}$. The main challenge for a multivariate analogue of the fixed-point sequence is to define an appropriate version of the quantile-preserving map $T:\mathbb{R}\to\mathbb{R}$ from Figure \ref{equilibriumex}. The reason is that the quantiles for multivariate cumulative distribution functions are not points but \emph{isoquants}, also called \emph{level sets}.
\begin{definition}[Isoquant]\label{isoquantdef}
For every point $x_0\in\mathcal{X}$, $I_F(x_0)$ denotes the \emph{isoquant} or \emph{level set} of $F$ at $x_0$, which is the set of all $x\in\mathcal{X}$ which have the same value $F(x)$ as $x_0$:
\[I_F(x_0)\coloneqq\{x\in\mathcal{X}:F(x)=F(x_0)\}.\] The \emph{epigraph} of the isoquant $I_F(x_0)$ is \[I^{\uparrow}_F(x_0)\coloneqq\{x\in\mathcal{X}:F(x)\geq F(x_0)\},\] i.e.~all values at or above the isoquant.
\end{definition}

Our idea is not to find a unique fixed point, but a lower-dimensional fixed set compared to the ambient set $\mathbb{R}^k$. By the fact that this set will be lower-dimensional than the dimension of $\mathcal{X}$, it will have zero probability under absolutely continuous distributions, similar to the fixed point in the univariate case.
The natural analogue of the quantile-preserving map $T$ in the univariate case is a \emph{cyclically monotone} map $T:\mathbb{R}^k\to\mathbb{R}^k$. If the distributions $P_U$ and $P_{X|Z=z}$ are absolutely continuous with respect to Lebesgue measure, then by Brenier's theorem (\citeauthor{brenier1991polar} \citeyear{brenier1991polar} or \citeauthor{villani2003topics} \citeyear{villani2003topics}, Theorem 2.12) $h$ is invertible and in particular a measure-preserving isomorphism. Recall that we require $h$ to me a measure preserving isomorphism between $P_U$ and $P_{X|Z=z}$ for all $z$ for our argument.

In addition to cyclic monotonicity, we also need to restrict the possible functional forms of $F$ and $G$ in order to be able to derive the dynamics of cyclically monotone maps between them. We do this by assuming they are quasi-concave.
We require strict quasi-concavity due to the following characterization of cyclically monotone maps between quasi-concave distribution functions with identical support.
\begin{lemma}\label{brenierismetricprojection}
Let $T$ be the cyclically monotone map between quasi-concave distribution functions $G$ and $F$ supported on $\mathcal{X}$.\footnote{Note that we say ``the'' cyclically monotone map, as it is unique between absolutely continuous distribution by Brenier's theorem \citep[2.12]{villani2003topics}.} Then, for each $x\in\mathcal{X}$, $T$ is either the metric projection of $x$ onto $I^{\uparrow}_{F}(x)$ or its inverse, which is the projection onto $I^{\uparrow}_{G}(x)$.
\end{lemma}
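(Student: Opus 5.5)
The plan is to identify $T$ through the uniqueness part of Brenier's theorem (\citet[Theorem 2.12]{villani2003topics}). Among all maps pushing $G$ forward to $F$, the cyclically monotone one is $P_G$-a.e.\ unique and equals the gradient $\nabla\varphi$ of a convex function. So it suffices to construct a candidate map $S$ that (a) is cyclically monotone and (b) transports $G$ onto $F$, and to check that $S$ is the metric projection described in the statement. Uniqueness then gives $T=S$ almost everywhere, and continuity of $S$ upgrades this to everywhere on $\mathcal{X}$.

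The candidate is built level by level. Quasi-concavity of $F$ and $G$ (Definition \ref{quasi-concavitydef}) makes every epigraph $I^{\uparrow}_F(x)$ and $I^{\uparrow}_G(x)$ a convex set. Continuous differentiability of $F$ and $G$ (Assumption \ref{supportass}) makes these sets closed in $\mathcal{X}$, so metric projections onto them are well defined and single-valued. For $x$ with $G(x)\ge F(x)$, let $S(x)$ be the metric projection of $x$ onto the superlevel set $\{F\ge G(x)\}$. Then $S(x)$ lies on the isoquant where $F$ equals $G(x)$, which is the multivariate analogue of the quantile-preserving horizontal map of Figure \ref{equilibriumex}. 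For $x$ with $G(x)<F(x)$ I would use the inverse construction, projecting onto the corresponding epigraph of $G$; this gives the ``or its inverse'' alternative in the statement.

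Step (a) uses the standard fact that the projection $P_C$ onto a closed convex set $C$ satisfies
\[
P_C=\nabla\left(\tfrac12\|x\|_2^2-\tfrac12 d_C(x)^2\right),
\]
and this potential is convex. Hence $P_C$ is cyclically monotone for each fixed $C$ in the sense of Definition \ref{cyclicmonotonicity}. What remains is to glue these projections across levels. I would parametrize the sets by the level $t=G(x)$ and show that the sets $\{F\ge t\}$ are nested and decreasing in $t$. I would then try to write the glued map as the gradient of a single convex function, by integrating the level-wise potentials against the level parameter, much as in the univariate case a monotone map is the derivative of its antiderivative.

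Step (b) is the pushforward property. I would verify it first on the sets $\{G\le t\}$, using the fact that $S$ maps the isoquant $\{G=t\}$ into the isoquant $\{F=t\}$. I would then extend the identity to the generating rectangles $(-\infty,x]$ via the coinciding supports and a monotone class argument.

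I expect steps (a) and (b) to be the main obstacle. In dimension $k\ge 2$, matching isoquant levels does not by itself force $S_\#P_G=P_F$, and a piecewise definition across the regions $\{G\ge F\}$ and $\{G<F\}$ need not be globally cyclically monotone. My first attempt would be to show that on each region the level-parametrized family of projections is exactly the gradient of one convex potential, and that the two potentials agree on the interface $\{F=G\}$, where $S$ is the identity. If that fails, I would restrict the claim to hold region by region, which is all that the fixed-set iteration later needs.
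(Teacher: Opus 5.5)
Your plan is essentially the paper's original proof: Moreau's potential for a projection, then reading the level identity $F(Sx)=G(x)$ as measure preservation, then Brenier--Rockafellar uniqueness. The two steps you flag are exactly where it breaks, and they cannot be closed. For $k\ge 2$ the lemma is false in general, which is why the Corrigendum appended to the paper withdraws this proof.

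The decisive obstruction is in (b). On either region your $S$ satisfies $F(S(x))=G(x)$. Suppose also $S_\#P_G=P_F$. Then for $X\sim P_G$ the variable $G(X)=F(S(X))$ has the law of $F(X')$ with $X'\sim P_F$. So the Kendall distribution functions $t\mapsto P_G(G(X)\le t)$ and $t\mapsto P_F(F(X')\le t)$ must coincide, and your ``verify on $\{G\le t\}$'' step is precisely this identity. In one dimension both laws are uniform, which is why the univariate argument works. For $k\ge 2$ they depend on the copula: $E[G(X)]=(1+\tau_G)/4$, with $\tau_G$ Kendall's tau. Hence two bivariate normal CDFs with different correlations (log-concave, hence quasi-concave and $C^1$) admit no level-preserving transport map at all, cyclically monotone or not. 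Even where the Kendall laws agree, the sets $\{F\le t\}$ generate only $\sigma(F)$. That $\sigma$-algebra does not contain the rectangles $(-\infty,x]$, since these are not unions of isoquants, so no monotone-class argument upgrades level matching to $S_\#P_G=P_F$.

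Step (a) also fails on its own. The potential $\tfrac12\|x\|_2^2-\tfrac12 d_C(x)^2$ covers a fixed $C$, but your set $C(t)=\{F\ge t\}$ is evaluated at $t=G(x)$. The Jacobian of $x\mapsto P_{C(G(x))}(x)$ then picks up the rank-one term $\partial_t P_{C(t)}(x)\,\nabla G(x)^\top$. This term is symmetric only when the two vectors are parallel, so the map is in general not a gradient even on a single region. Integrating level-wise potentials does not produce a potential for it.

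Your region-by-region fallback does not help. The obstruction comes from level preservation itself, not from the gluing across $\{F=G\}$. Moreover, the later fixed-set iteration relies on exactly the false identity $G(T^nx)=F(T^{n+1}x)$ for the true Brenier map.

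What is missing is a different object, not a technical lemma. The corrigendum drops CDF level sets and works with the inverse Brenier map $R_z=h_z^{-1}=\nabla\psi$ directly, under the new Assumption~$4'$. It replaces $\mathcal{I}(F,G)$ by the fixed set $\{x: R_z(x)=x\}$. It then shows that orbits accumulate on this set, because $\psi(x)-\tfrac12\|x\|_2^2$ increases by at least $\tfrac12\|R_z(x)-x\|_2^2$ at each step.
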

In short, this lemma characterizes cyclically monotone maps between quasi-concave distribution functions with the same support as the \emph{metric projection} from isoquants of one function onto isoquants of the same value of the other function. The metric projection $T$ of $x$ onto the closed convex set $I_F(x^*)$ maps $x$ onto the point $y\in I^\uparrow_F(x^*)$ which is closest to $x$ in the sense that \[y=\argmin_{s\in I^\uparrow_F(x^*)}\|x-s\|_2^2.\] Quasi-concavity of $F$ and $G$ guarantee that this map exists and is unique since $I^\uparrow_F(x^*)$ is a closed and convex subset of $\mathcal{X}$ in this case \citep[p.~248]{aliprantis2006infinite}. Figure \ref{brenierfigure} depicts the idea of Lemma \ref{brenierismetricprojection} in the case where the isoquants between $G$ and $F$ intersect at some $x^*$.
\begin{figure}[htp]
\centering
\begin{tikzpicture}
\begin{scope}
\clip (-4,-3) rectangle (1.6, 1.6);
\filldraw[very thick,gray!60] (0,0) circle [x radius=4cm, y radius=2.6cm, rotate=-30];
\draw[very thick] (0,0) circle [x radius=4cm, y radius=2.6cm, rotate=-30];
\end{scope}

\begin{scope}
\clip (-4,-3.55) rectangle (1.6, 1.6);
\draw[very thick] (0,0) circle [x radius=3cm, y radius=4cm, rotate=10];
\end{scope}
\node[above] at (-3.1,-1.35) {$x_*$};
\filldraw[black] (-2.15, -2.5) circle [x radius=0.05cm, y radius=0.05cm];
\node[left] at (-2.3,-2.5) {$x$};
\draw[->,line width=0.5mm] (-2.15, -2.5) to (-1.75,-1.94);
\node[above] at (-4, -3.2) {$I_G(x_*)$};
\draw[->,thick] (-4, -3.2) to[bend right] (-1.8,-3.2);
\node[right] at (-1.72,-1.65) {$Tx$};
\node[above] at (0,0) {$I^{\uparrow}_F(x_*)$};
\filldraw[black] (-2.97, 1.3) circle [x radius=0.05cm, y radius=0.05cm];
\node[right] at (-2.97, 1.3) {$x'$};
\draw[->,line width=0.5mm] (-2.97, 1.3) to (-3.7,1.4);
\node[left] at (-3.7,1.4) {$Tx'$};
\node[right] at (2.4,-2.5) {$I_F(x_*)$};
\draw[->,thick] (2.4,-2.5) to[bend right] (-0.5,-2.5);
\end{tikzpicture}
\caption{An example for the two types of maps the cyclically monotone map $T$ can be for two different points $x,x'\in I_G(x_*)$: (i) the metric projection $T$ of $x\in I_G(x)$ onto $I^{\uparrow}_F(x)$ via $Tx$ or (ii) the inverse of the metric projection, i.e.~the map $T^{-1}$, of the point $Tx'\in I_F(x_*)$ onto $x'\in I^{\uparrow}_G(x_*)$.}
\label{brenierfigure}
\end{figure}
In the first example the point $x\in I_G(x_*)$ is outside the convex set $I^{\uparrow}_F(x_*)$, which means that $G(x)>F(x)$ there. Therefore, the cyclically monotone map in this case is the metric projection of $x$ onto $I^{\uparrow}_F(x_*)$, denoted by $Tx$. In the second case, the point $x'\in I_G(x_*)$ lies inside $I^{\uparrow}_F(x_*)$, because $G(x')<F(x')$. In this case, notice that if we exchange the roles of $G$ and $F$, it holds that there is a point (which we conveniently label $Tx'$), for which the cyclically monotone map $T^{-1}$ from $F$ onto $G$ is the metric projection of  $Tx'\in I_F(x_*)$ onto $I^{\uparrow}_G(x_*)$.

A cyclically monotone map between the distribution functions ``preserves the quantiles'' as in the univariate case. Furthermore, it does so in a very simple manner, via projections. This allows us to derive the dynamics, analogously to the univariate setting. All we need for this is that the distribution functions $F$ and $G$ \emph{intersect} appropriately. This is where the intersection condition from Assumption \ref{supportass} comes in. It implies the existence of a fixed set to which the dynamics for every point in $\mathcal{X}$ converge. This is analogous to the univariate case: if there exist quantiles $\alpha$ and $\beta$ as above then $F(x)$ and $G(x)$ must intersect at least once between these two quantiles.
\begin{lemma}[Fixed-set dynamics]\label{dynamicslemma}
Under Assumption \ref{supportass} there exists a manifold $\mathcal{I}(F,G)\coloneqq\{x\in\mathcal{X}: F(x)=G(x)\}$ which generically is of lower dimension than $k$ and is hence of (Lebesgue-) measure zero. Furthermore, for every $x\in\mathcal{X}$, the repeated application of the cyclically monotone map $T$ between $F$ and $G$ or its inverse $T^{-1}$ will converge to a point $x^*\in\mathcal{I}(F,G)$.
\end{lemma}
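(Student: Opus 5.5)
The plan has two parts: the structure of the coincidence set $\mathcal{I}(F,G)$, and convergence of the orbits of $T$ (or $T^{-1}$) to it. Throughout I use Lemma \ref{brenierismetricprojection}: at each point $T$ acts either as the metric projection onto an upper level set $I^{\uparrow}_F$ or as the inverse of the projection onto $I^{\uparrow}_G$. Either way it preserves quantile values, $F(Tx)=G(x)$.

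\emph{Step 1: the coincidence set is nonempty and thin.} Put $D\coloneqq G-F$, which is continuously differentiable on the convex set $\mathcal{X}$ by Assumption \ref{supportass}.
\begin{itemize}
\item In case (ii), $D>0$ on $\{\alpha\leq G<1\}$ and $D<0$ on $\{0<G\leq\beta\}$. Since $\mathcal{X}$ is convex, hence connected, the intermediate value theorem along a segment joining the two regions gives a zero of $D$, so $\mathcal{I}(F,G)=D^{-1}(0)\neq\emptyset$.
\item In case (i), boundedness above or below forces $F$ and $G$ to share the extreme values $0$ or $1$ on the relevant boundary portion, so $\mathcal{I}(F,G)$ contains that boundary piece.
\end{itemize}
For the dimension claim, at regular values $\nabla D\neq 0$ on $D^{-1}(0)$, and the implicit function theorem makes $D^{-1}(0)$ a $C^1$ hypersurface. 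By Sard's theorem, and after a small perturbation argument, this is the generic situation, and the set then has Lebesgue measure zero. The statement only says ``generically'', so I will not try to exclude flat pieces.

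\emph{Step 2: monotonicity of the orbit.} Start from $x_0$ with, say, $G(x_0)>F(x_0)$, so $x_0\notin I^{\uparrow}_F(x_0)$ and $T$ acts as a projection. Set $x_{n+1}=Tx_n$. Then $F(x_{n+1})=G(x_n)>F(x_n)$, so the values $a_n\coloneqq F(x_n)$ strictly increase while the sign of $D(x_n)$ stays positive. I will check that the sign cannot flip without crossing $\mathcal{I}(F,G)$. This should follow because the projection moves along a segment, by convexity of the upper level sets, and $D$ is continuous along that segment. The sequence $a_n$ is increasing and bounded by the quantile $\alpha$ (case (ii)) or by $1$, so it converges to some $a^*$. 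Symmetrically, points with $G<F$ are handled with $T^{-1}$ and decreasing values.

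\emph{Step 3: convergence of the points themselves, which I expect to be the main obstacle.} Convergence of the values $a_n$ does not give convergence of the points $x_n$. The key fact is that metric projections onto the nested convex sets $I^{\uparrow}_F(x_n)$ are firmly nonexpansive. I would sum the step lengths: $\|x_{n+1}-x_n\|$ is the distance from $x_n$ to the next level set, and by $C^1$ regularity with a gradient bound on compacts this is comparable to $a_{n+1}-a_n$. Summability of $a_{n+1}-a_n$ would then make the orbit Cauchy.

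I need the points to stay in a compact set. In case (ii) I would get this from the quantile window; in case (i) from the bound on the support.

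For the limit $x^*$, continuity of $F$, $G$ and $T$ gives $F(x^*)=a^*=G(x^*)$, so $x^*\in\mathcal{I}(F,G)$. If the gradient bound fails near the boundary, I would fall back on a Fejér-monotonicity argument: show that $\|x_n-y\|$ is nonincreasing for suitable points $y$ of the coincidence set lying on the limiting level set, and then apply Opial's lemma to conclude convergence.
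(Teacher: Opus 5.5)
\textbf{The gap is in Step 2, and the rest of the proposal depends on it.} You take from Lemma 1 that the cyclically monotone map preserves CDF levels, $F(Tx)=G(x)$, i.e.\ that it is a metric projection between level sets. For $k\ge 2$ this is false.

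The Brenier map satisfies $T_{\#}P_G=P_F$. Hence $F(Tx)=P_G\bigl(T^{-1}\{y\le Tx\}\bigr)$, and $T^{-1}\{y\le Tx\}$ is not the orthant $\{y\le x\}$ unless $T$ acts coordinatewise. A concrete case: let $F$ be the CDF of $N(0,I_2)$ and $G$ that of $N(0,\Sigma)$ with $\Sigma=\begin{pmatrix}0.4&0.2\\0.2&0.4\end{pmatrix}$. The cyclically monotone map from $G$ to $F$ is $Tx=\Sigma^{-1/2}x$, so $T0=0$. But $G(0)=\tfrac14+\arcsin(\tfrac12)/(2\pi)=\tfrac13>\tfrac14=F(0)$, so Lemma 1 would require $T0$ to be the projection of $0$ onto $\{F\ge\tfrac13\}$.

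This is exactly the error that the Corrigendum appended to the paper retracts: level preservation is not the same as $T_{\#}G=F$. The paper's own proof of this lemma uses the same false premise in the same way. Both its ``no passing through intersections'' step and its monotone sequence $F(T^nx)$ with $G(T^nx)=F(T^{n+1}x)$ rest on it. So you are following the paper's route (your Step 1 is its Part 1: intermediate value theorem plus transversality/Sard), and you inherit its flaw. Your Step 3 worry that convergence of values does not give convergence of points is a real extra gap in the paper's ``pull out the limit'' step, but it is moot here.

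\textbf{The argument cannot be patched, because the conclusion is not what the true dynamics give.} Any limit of $T^nx$ or $T^{-n}x$ is a fixed point of the continuous map $T$, and there is no reason for $\{x:Tx=x\}$ to lie in $\{F=G\}$.

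In the example, the eigenvalues of $\Sigma^{-1/2}$ are $1/\sqrt{0.6}$ and $1/\sqrt{0.2}$, both larger than one. So every forward orbit of $x\neq0$ diverges, and every backward orbit converges to $0\notin\mathcal{I}(F,G)$. Yet this pair satisfies Assumption 4:
\begin{itemize}
\item Gaussian CDFs are smooth and log-concave, hence quasi-concave.
\item A union bound gives $G>F$ wherever $G\ge\alpha$, for $\alpha$ near one.
\item Gaussian tail estimates give $G<F$ wherever $G\le\beta$, for $\beta$ small.
\end{itemize}
Hence the convergence assertion of the lemma fails at every $x$.

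The set the dynamics actually control is $\mathrm{Fix}(T)$, which is the set $\mathcal{I}_R$ of the Corrigendum. There, the right replacement for your Step 3 is the Lyapunov function $L=\psi-\tfrac12\|\cdot\|^2$ with $T=\nabla\psi$. By convexity, $L(Tx)-L(x)\ge\tfrac12\|Tx-x\|^2$. On a compact support this forces $\|x_{n+1}-x_n\|\to0$ and puts all accumulation points in $\mathrm{Fix}(T)$.

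\textbf{Two further steps would fail even under your projection picture.}
\begin{itemize}
\item In case (ii), $\alpha$ is not an upper bound for $a_n$. The region $\{G\ge\alpha\}$ is where $G>F$, so forward orbits there climb toward level one, and you would have to iterate $T^{-1}$ instead.
\item For $k\ge2$ the window $\{\beta\le G\le\alpha\}$ is unbounded, so it gives no compactness. Bounding $\|x_{n+1}-x_n\|$ by $a_{n+1}-a_n$ would also need a lower bound on $\|\nabla F\|$, not an upper bound.
\end{itemize}
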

The manifold $\mathcal{I}(F,G)$ in Lemma \ref{dynamicslemma} is the fixed-set we require for our subsequent identification result. Note that we require strict quasi-concavity to guarantee that $\mathcal{I}(F,G)$ is generically lower-dimensional and hence of measure zero. We need this lower-dimensional manifold for our identification result to hold for almost every $x$. This is also analogous to the univariate case, as a point in the univariate case is a lower-dimensional subset.

The genericity condition in the statement follows from the generality of Assumption \ref{supportass}. A property is \emph{generic} if the set of all possible elements which satisfy this property are of second category in the sense of Baire \citep[chapter 3.11]{aliprantis2006infinite}. Intuitively, genericity is the topological analogue of an ``almost sure'' property: instead of introducing a probability structure on a space one works with the topological structure. In this sense, one can view sets of second category as the topological analogues of sets of probability one. In our case we need the genericity statement because the distribution functions in Assumption \ref{supportass} (ii) will intersect generically in a lower-dimensional manifold.

Compare this to the univariate setting from \citet{torgovitsky2015identification}. There, the author directly requires that the two distribution functions intersect in only one point, and not in a connected set. The reasoning is exactly the same as in our case: in the univariate case a point is a submanifold of measure zero, and one can only identify the function $m$ up to this lower-dimensional manifold. Instead of making the stronger assumption that the two distribution functions intersect in a certain manifold, we make the much easier to check assumption on the existence of the quantiles $\alpha$ and $\beta$ and show that this generically leads to the required intersection. One could do the exact same thing in the univariate case in \citet{torgovitsky2015identification}. In fact, it follows from the same reasoning that the set of intersections of two univariate strictly increasing and continuous distribution functions will generically be lower-dimensional, i.e.~a collection of disconnected points.

To make Assumption \ref{supportass} and Lemma \ref{dynamicslemma} more tangible, consider Figure \ref{contourplot} which depicts the contour maps of bivariate t- and Normal distributions as well as their intersections.
\begin{figure}[htb]
\centering
\includegraphics[width=6.5cm,height=5.5cm]{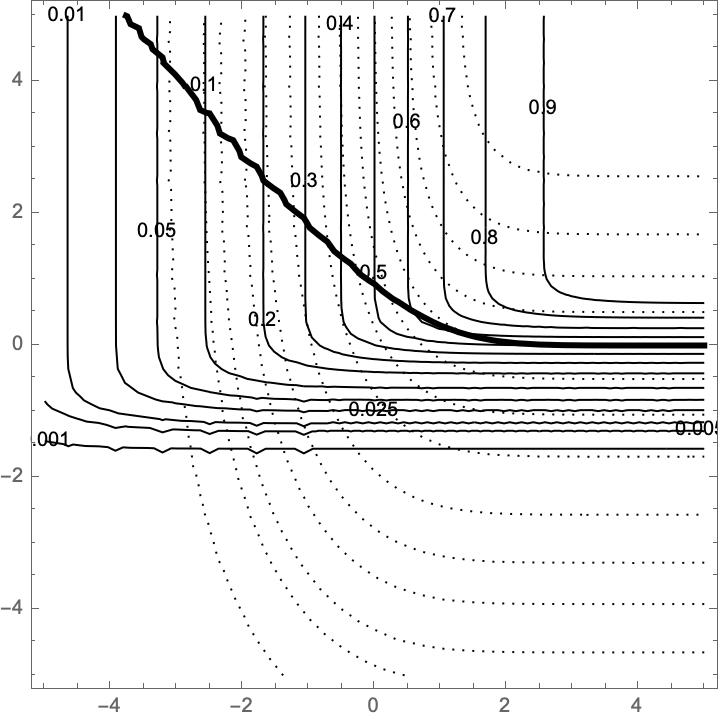}\hspace{1cm}
\includegraphics[width=6.5cm,height=5.5cm]{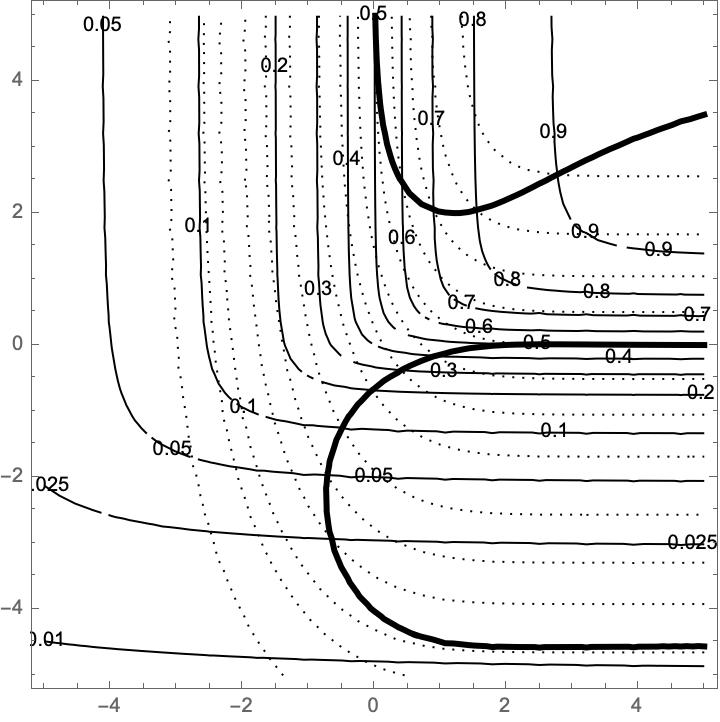}
\caption{Contour plots of bivariate t- and Normal distributions $F$ and $G$ with manifold of intersection $\mathcal{I}(F,G)$ in bold. Left panel: bivariate Normal distributions $F$ (solid) and $G$ (dotted) with mean $\mu(F) = \mu(G) = [0 \quad 0]$ and covariance matrices $\Sigma(F)=\binom{1\quad 0.8}{0.8\quad 4}$ and  $\Sigma(G)=\binom{1\quad 0.2}{0.2\quad 2}$. Right panel: bivariate $t$-distribution $F$ (solid) with $\nu=2$ degrees of freedom and covariance matrix $\Sigma(G)=\binom{2\quad 0.8}{0.8\quad 0.5}$ and bivariate Normal distribution $G$ (dotted) with mean $\mu=[0\quad 0]$ and covariance matrix $\Sigma(G)=\binom{1\quad 0.2}{0.2\quad 2}$.}\label{contourplot}
\end{figure}
The intersection of the two distribution functions, the dark black line, is one connected manifold in the left panel. It is not difficult to construct examples where $\mathcal{I}(F,G)$ consists of several manifolds, as depicted in the right panel of Figure \ref{contourplot}. The existence of several manifolds is a generalization of the univariate case where there can exist more points $x^*$ of equilibrium, i.e.~several intersections between the distribution functions. We only need to require that at least one of these sub-manifolds satisfies property $(ii)$ in the case where the support is unbounded. This is provided by the ``upper'' manifold in the right panel of Figure \ref{contourplot}. As long as all of these manifolds are lower-dimensional than the support (which they clearly are in this example as they are univariate curves), we will be able to identify the model up to the union of these manifolds, which will be of measure zero.

\section{Comparison to existing approaches}
We now provide a general scenario where our proposed multivariate criterion based on Assumption \ref{supportass} is applicable, but the multivariate generalizations of the existing approaches in \citet{torgovitsky2015identification} and \citet{d2015identification} are not. By ``multivariate generalizations of the existing approaches'', we mean the straightforward generalization of the univariate approaches to the setting of a multivariate $X$ by applying the requirement of intersection of the univariate CDF $F_{X|Z=z}$ and $F_{X|Z=z'}$ element-wise. This has been done in the supplement \citet{torgovitsky2015supplement} for instance. For a vector-valued random variable $X\in\mathbb{R}^k$, it requires that each pair of marginal distributions $F_{X_i|Z=z}$ and $F_{X_i|Z=z'}$, $i=1,\ldots, k$, intersect in at most finitely many points, but it does not restrict the joint distributions $F_{X|Z=z}$ and $F_{X|Z=z'}$ over all elements $X_i$ as we do here.

Information from joint distributions is important in economic settings compared to mere information from the marginal distributions,. An example of this is from the literature on poverty indices \citep{duclos2006robust}. In that paper, the authors provide an example (their Figure 4) of two bivariate CDF $F$ and $G$ with bounded support with identical marginal distributions, but where one CDF dominates the other in the first order. Their example illustrates the point that there are important multivariate relations between joint CDF that cannot be captured by looking at only the marginal distributions. Transported to our setting, their examples satisfies part (i) of Assumption \ref{supportass}, and model \eqref{maineq} would be identified if these two distributions corresponded to $F_{X|Z=z}$ and $F_{X|Z=z'}$ under a few further assumptions on the model as specified below. On the other hand, the existing element-wise generalizations of \citet{torgovitsky2015identification} and \citet{d2015identification} will not find that the model is identified, because the marginal distributions in this example all coincide and therefore violate the requirement that they intersect in finitely many points.

This example is one special case of the more general setting where our approach can provide point-identification while existing approaches cannot: copulas, where all marginal distributions are equal, but the dependence structure changes. The above example by \citet{duclos2006robust} is but one example, and there are many more examples where the copulas of $F_{X|Z=z}$ and $F_{X|Z=z'}$ are such that they satisfy part (i) or (ii) of Assumption \ref{supportass}. These are straightforward to check in practice. As an extreme example, any copula will be first order-stochastically dominated by the copula taking on the lower Fr\'echet-Hoeffding bound, so that part (i) of Assumption \ref{supportass} will always be satisfied in this setting where we consider copulas with bounded support.

Examples for copulas that satisfy part (ii) of Assumption \ref{supportass} also abound. A simple example is illustrated in the left panel of Figure \ref{contourcopula}. It depicts the contour plots of a Gumbel (i.e.~extreme-value) copula with parameter $\theta=2$ and an AMH copula \citep{ali1978class} with parameter $\theta=1$. Their manifold of intersection inside the support, $\mathcal{I}(F,G)$, is depicted in bold. Part (ii) of Assumption \ref{supportass} is easily satisfied for this example, which is the main criterion for point-identification of model \eqref{maineq}. On the other hand, all marginals coincide, so that the element-wise generalizations of the univariate approaches are not able to provide identification.

Finally, the right panel of Figure \ref{contourcopula} provides an example where  part (i) of Assumption \ref{supportass} is satisfied without first-order stochastic dominance. The manifold of intersection $\mathcal{I}(F,G)$ here does not go from boundary to boundary, which means that there do not exist the required quantiles $\alpha$ and $\beta$ for part (ii). Since the support is compact and the set of intersection is a closed curve that is lower dimensional, part (i) of Assumption \ref{supportass} is satisfied and identification is guaranteed. However, if the support of the two distributions were not compact, then our method could not provide identification of the model.
\begin{figure}[htb]
\centering
\includegraphics[width=6.5cm,height=5.5cm]{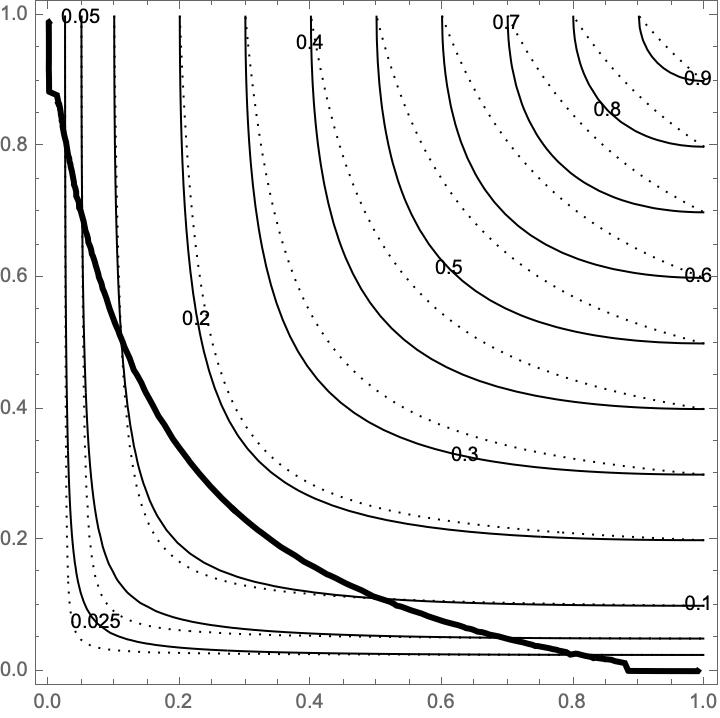}\hspace{1cm}
\includegraphics[width=6.5cm,height=5.5cm]{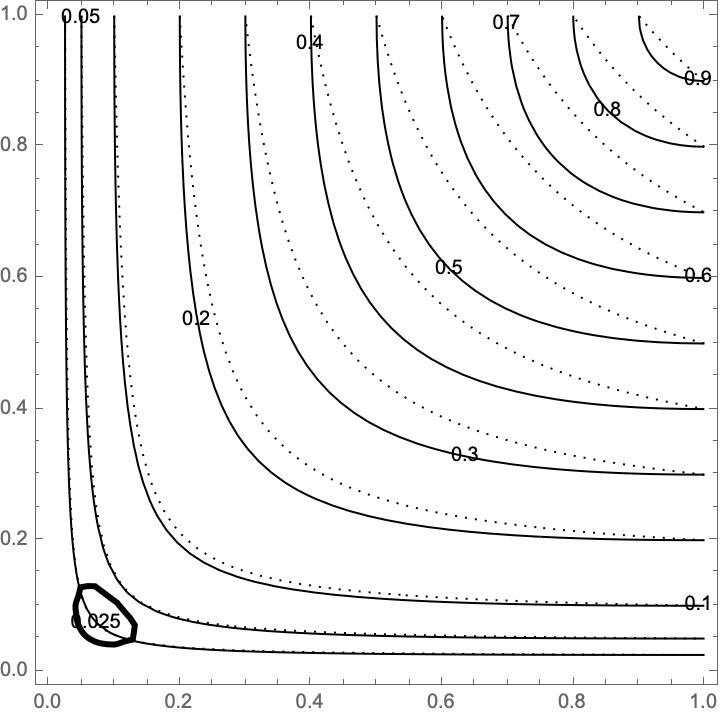}
\caption{Contour plots of bivariate copulas with uniform marginals. Manifold of intersection $\mathcal{I}(F,G)$ is depicted in bold. Left panel: Gumbel copula with parameter $\theta=2$ (solid) and AMH copula with parameter $\theta=1$ (dotted). Right panel: Gumbel copula with parameter $\theta=2$ (solid) and AMH copula with parameter $\theta=0.915$ (dotted).}\label{contourcopula}
\end{figure}

\section{Asymptotic properties}\label{asymptotics}
The condition proposed in Assumption \ref{supportass} for the identification of model \eqref{maineq} is a restriction on the observable distributions $F_{X|Z=z}$ and $F_{X|Z=z'}$, which can be estimated in practice. Part (i) of Assumption \ref{supportass} is often impossible to check nonparametrically: the issue is to check whether the support is finite. Part (ii) of Assumption \ref{supportass} works for unbounded supports, and will therefore often be the main empirical criterion to check in completely nonparametric models, i.e.~models where the applied researcher does not want to make assumptions on the distributions.

In this section, we therefore focus on the potential region of intersection of $F_{X|Z=z}$ and $F_{X|Z=z'}$ as prescribed by part (ii) of Assumption \ref{supportass}. Estimating this in practice is straightforward, even in high-dimensions: all one has to do is estimate the function $D(x)\coloneqq F_{X|Z=z}(x)-F_{X|Z=z'}(x)$ by its empirical analogue $\hat{D}_n(x)=\hat{F}_{X|Z=z;n}(x)-\hat{F}_{X|Z=z';n}(x)$, where $\hat{F}_{X|Z=z;n}$ is the empirical analogue of the conditional CDF for $n$ observations, which can either be calculated via the standard conditional empirical CDF
\[\hat{F}_{X|Z=z;n}(x)\coloneqq \frac{\frac{1}{n}\sum_{i=1}^n \prod_{j=1}^k\mathds{1}(X_{ij}\leq x_j)\mathds{1}(Z_i=z)}{\frac{1}{n}\sum_{i=1}^n\mathds{1}(Z_i=z)}\] or by a kernel smoothing approach. The classical empirical approach seems preferable in high dimensions, i.e.~for large $k$, as there exist efficient and fast methods to compute the empirical distribution function in high dimensions (e.g.~\citet{bentley1980multidimensional} and \citet{langrene2020fast}). In this section, we therefore focus on the classical case and not its smoothed alternative.

The following is the main result of the section, providing uniform large sample properties of $\hat{D}_n(x)$. This result can be used to provide uniform confidence intervals for $\hat{D}$, which is important when trying to check a multivariate intersection based on quantiles. In contrast, existing approaches estimating intersections of CDF as in \citet{duclos2006robust} only provide pointwise confidence intervals, which in general do not provide confidence intervals which uniformly cover the intersection manifold $\mathcal{I}(F_{X|Z=z}, F_{X|Z=z'})$.

\begin{proposition}\label{asymptoticprop}
If $\{(X_i, Z_i)\}_{i=1,\ldots,n}\coloneqq \left\{(X_{i1},\ldots, X_{ik}, Z_i)\right\}_{i=1,\ldots,n}$ is a sequence of iid random variables, then the empirical process
\[\sqrt{n}\left(\hat{D}_n(x)-D(x)\right)\coloneqq\sqrt{n}\left(\left(\hat{F}_{X|Z=z;n}-\hat{F}_{X|Z=z';n}\right)-\left(F_{X|Z=z}-F_{X|Z=z'}\right)\right)\] satisfies
\[\sqrt{n}\left(\hat{D}_n(x)-D(x)\right)\rightsquigarrow \mathbb{G}(x)\] uniformly, where $\mathbb{G}(x)$ is a mean-zero Gaussian process with covariance kernel
\begin{multline}\text{CoV}(\mathbb{G})(x,x')= \frac{F_{X|z=z}(\min\{x,x'\})-F_{X|Z=z}(x)F_{X|Z=z}(x')}{P(Z=z)}\\+\frac{F_{X|z=z'}(\min\{x,x'\})-F_{X|Z=z'}(x)F_{X|Z=z'}(x')}{P(Z=z')}\end{multline} for any $x,x'\in\mathcal{X}$.
\end{proposition}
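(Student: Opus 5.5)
The plan is to reduce the statement to the classical multivariate Donsker theorem for empirical distribution functions, combined with the delta method for the ratio defining the conditional empirical CDF. Write $p\coloneqq P(Z=z)$, $p'\coloneqq P(Z=z')=1-p$, and consider the class of indicator functions
\[\mathcal{F}\coloneqq\Bigl\{(x',\zeta)\mapsto \textstyle\prod_{j=1}^k\mathds{1}(x'_j\leq x_j)\mathds{1}(\zeta=\tilde z):\ x\in\mathbb{R}^k,\ \tilde z\in\{z,z'\}\Bigr\}\cup\bigl\{\mathds{1}(\zeta=z),\mathds{1}(\zeta=z')\bigr\}.\]
Lower orthants form a VC class. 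Products with a fixed indicator, together with finite unions of classes, preserve the VC/Donsker property. Hence $\mathcal{F}$ is $P$-Donsker, and the joint empirical process $\sqrt{n}(\mathbb{P}_n-P)$ indexed by $\mathcal{F}$ converges weakly in $\ell^\infty(\mathcal{F})$ to a tight Brownian bridge (van der Vaart and Wellner, Thm.~2.5.2 or 2.6.7).

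Next I write $\hat F_{X|Z=z;n}(x)=\mathbb{P}_n f_{x,z}/\mathbb{P}_n g_z$, with $f_{x,z}$ and $g_z$ the indicators above, and similarly for $z'$. The map
\[\phi:(a_z(\cdot),b_z,a_{z'}(\cdot),b_{z'})\mapsto a_z(\cdot)/b_z-a_{z'}(\cdot)/b_{z'}\]
from $\ell^\infty(\mathbb{R}^k)^2\times\mathbb{R}^2$ (with $b$'s bounded away from zero) to $\ell^\infty(\mathbb{R}^k)$ is Hadamard differentiable. Its derivative at $(F_{X|Z=z}p,\,p,\,F_{X|Z=z'}p',\,p')$ is
\[(h_1,c_1,h_2,c_2)\mapsto \frac{h_1-F_{X|Z=z}c_1}{p}-\frac{h_2-F_{X|Z=z'}c_2}{p'}.\]
By the functional delta method, $\sqrt n(\hat D_n-D)\rightsquigarrow\mathbb{G}$. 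Here $\mathbb{G}$ is the derivative applied to the Brownian bridge, so it is a mean-zero Gaussian process, and the convergence is uniform in $x$ because the limit lives in $\ell^\infty$. Positivity of $p,p'$ is implicit in the proposition; I will state it as the condition that both realizations have positive probability.

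It remains to compute the covariance. For $x,x'$, the term from $z$ is
\[\frac{1}{p^2}\mathrm{Cov}\bigl(f_{x,z}-F_{X|Z=z}(x)g_z,\ f_{x',z}-F_{X|Z=z}(x')g_z\bigr),\]
and direct expansion gives
\[\frac{1}{p^2}\Bigl(pF_{X|Z=z}(x\wedge x')-pF_{X|Z=z}(x)F_{X|Z=z}(x')\Bigr).\]
The mean terms cancel because $E[f_{x,z}-F_{X|Z=z}(x)g_z]=0$. This yields the stated first summand, with $\min\{x,x'\}$ the componentwise minimum. The cross terms between $z$ and $z'$ vanish: the products of the indicators $\mathds{1}(Z=z)\mathds{1}(Z=z')$ are zero and the centered means are zero. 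The $z'$ term is symmetric. This gives the stated kernel.

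The main obstacle is bookkeeping rather than depth. The step I expect to need the most care is justifying the delta method jointly with random denominators, in particular checking that the derivative map is continuous on the product space and that the cross-covariance between the $z$ and $z'$ components is exactly zero after centering. I will verify the latter by the explicit computation above. The Donsker step itself is standard.
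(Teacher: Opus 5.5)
Your proof is correct and follows the same route as the paper. Both arguments use a Donsker class of orthant indicators multiplied by $\mathds{1}(Z=\tilde z)$, linearize the ratio to the influence function $\mathds{1}(Z=z)\bigl(\mathds{1}(X\le x)-F_{X|Z=z}(x)\bigr)/P(Z=z)$, and compute the covariance with vanishing $z$--$z'$ cross terms; the only difference is that you linearize via the functional delta method for the Hadamard-differentiable ratio-and-difference map, whereas the paper uses a Taylor expansion followed by the continuous mapping theorem, so your version controls the remainder uniformly in $x$ more cleanly.
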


Proposition \ref{asymptoticprop} is helpful in practice, as the confidence intervals can be straightforwardly obtained as the limit process is Gaussian. It is also an indication that the classical bootstrap method is valid in this setting. Based on this, one can check if the graph of $F_{X|Z=z}$ lies above the graph of $F_{X|Z=z'}$ for all $x$ which satisfy $F_{X|Z=z}(x)>\alpha$ and the reverse condition for all $x$ which satisfy $F_{X|Z=z}(x)<\beta$ for some values $\beta\leq\alpha\in(0,1)$.

\section{Conclusion}\label{conclusion}
In this article we have proposed a simple condition for the nonparametric point-identification of instrumental variable models with general unobserved heterogeneity and a multivariate first- and second stage.
The main result is a direct generalization of the result from \citet{torgovitsky2015identification} for point-identification of nonseparable triangular models with discrete instruments. Interestingly we can allow for (almost) arbitrary heterogeneity in the second stage.  Instead of using a control variable approach \citep{imbens2009identification} which in this form is not available in higher dimensions, we assume implicitly that the unobservable of the first stage is known. The key is a novel fixed-set iteration result for cyclically monotone maps between quasi-concave distribution functions.
The asymptotic distribution of the relevant statistic for checking the condition is derived: the limit distribution is Gaussian which makes it particularly straightforward to check the condition in practice.

\appendix
\renewcommand{\theequation}{\thesection.\arabic{equation}}
\section{Proofs}
For the proofs below, we need to introduce the concept of transversal intersection of manifolds.
The following definition is adapted from \cite{milnor1997topology}.
\begin{definition}\label{transversalitydefinition}
Two submanifolds $N,N'\subset\mathbb{R}^k$ intersect transversally if for each $x\in N\cap N'$ their tangent spaces at $x$, denoted by $T_x N$ and $T_x N'$, together generate $\mathbb{R}^k$ in the sense that $T_x N\oplus T_xN'=\mathbb{R}^k$, where $\oplus$ denotes the direct sum.
\end{definition}
Transversal intersection of indifference curves of different utility functions is a standard assumption made in economic theory \citep{mas1989theory} and is very weak since it is a \emph{generic} property: the set of indifference curves between different preferences which do not intersect transversally is of first category in the sense of Baire by Sard's theorem \citep[Theorem 1.1.1]{mas1989theory}.\footnote{A set in a topological space is of first Baire category if it is the (countable) union of nowhere dense sets, i.e.~subsets whose closures have an empty interior.} In practice, this means that one will basically never encounter indifference curves which do not intersect transversally. For a proof of this fact, we refer to \citet[p.~68]{guillemin19874differential}.

\subsection{Proof of Lemma \ref{brenierismetricprojection}}
\begin{proof}
The goal is to show that the cyclically monotone map between $F$ and $G$ can only take two forms: it is either (i) the \emph{metric projection} of $x\in I_G(x_*)$ onto the closed convex set $I^{\uparrow}_F(x_*)$ for those $x$ for which $G(x)\geq F(x)$ or (ii) the inverse of the metric projection of (a point which we conveniently call) $Tx'\in I_F(x_*)$ onto $x'\in I^{\uparrow}_G(x_*)$ for those $x'$ for which $G(x')\leq F(x')$. This is depicted in Figure \ref{brenierfigure} in the main text.

By quasi-concavity of $F$ and $G$ it holds that $I^{\uparrow}_G(x)$ and $I^{\uparrow}_F(x)$ are closed convex subsets of $\mathcal{X}$ for all $x$.
The metric projection $T$ of $x$ onto the closed convex set $I^{\uparrow}_F(x_*)$ maps $x$ onto the point $Tx\equiv y\in I^{\uparrow}_F(x_*)$ which is closest to $x$ in the sense that \[y=\argmin_{s\in I_F(x_*)}\|x-s\|_2^2.\] This map exists and is unique since $I^{\uparrow}_F(x_*)$ is a closed and convex subset of $\mathcal{X}$ by the quasi-concavity of $F$ \citep[p.~248]{aliprantis2006infinite}.

In the case $G(x)>F(x)$, $x$ lies outside $I^{\uparrow}_F(x_*)$. This is case (i) depicted for the point $x$ in Figure \ref{brenierfigure}. The fact that the cyclically monotone map between $G$ and $F$ is the metric projection for those $x\in I_G(x_*)$ which lie outside $I^{\uparrow}_F(x_*)$ (case (i) in Figure \ref{brenierfigure}) now follows directly from the result that the metric projection onto a closed convex set in a Hilbert space is the gradient of a convex function, shown in \citet{moreau1965proximite}  and \citet{holmes1973smoothness}. In particular, this metric projection is measure preserving since it maps a point of probability $G(x)$ onto a point of probability $F(Tx)=G(x)$ by the fact that the $x$ lies outside $I^{\uparrow}_F(x_*)$. Then, by the classical Theorem by Rockafellar \citep[Theorem 24.9]{rockafellar1997convex}, the measure preserving gradient of a (strictly) convex function between $G$ and $F$ is the \emph{unique} cyclically monotone map, so that it must coincide with the metric projection. Note that in the case of equality, i.e.~$G(x)=F(x)$, the metric projection is trivial in the sense that $Tx=x$.

Since both $F$ and $G$ are absolutely continuous, the analogous argument must hold for points $Tx'\in I_F(x_*)$ where $G(x)<F(x)$ for the inverse $T^{-1}$ if we switch the roles of $F$ and $G$ (this is case (ii) in Figure \ref{brenierfigure}). Indeed, recall that when both distributions are absolutely continuous, the cyclically monotone map is almost everywhere invertible with inverse $T^{-1}$ \citep[Theorem 2.12]{villani2003topics}. Now analogously to before it holds that those points $Tx'\in I_F(x_*)$ lie outside $I^{\uparrow}_G(x_*)$. From the same reasoning it follows that $T^{-1}$ cannot be the cyclically monotone map unless it is the metric projection of $Tx'\in I_F(x_*)$ onto the convex set $I^{\uparrow}_G(x_*)$. Therefore, the cyclically monotone map $T$ for those $Tx'$ is the inverse of the metric projection onto $I^{\uparrow}_G(x_*)$.
\end{proof}

\subsection{Proof of Lemma \ref{dynamicslemma}}
\begin{proof}
Let us begin with the first part of the lemma, the existence of the lower-dimensional manifold. \\

\noindent\emph{Part 1:} $F$ and $G$ are quasi-concave by Assumption \ref{supportass}. Under case (i) they intersect at points with finite values by the requirement that their supports coincide and are bounded. This set of intersection is lower dimensional if $F$ and $G$ intersect at the boundary of $\mathcal{X}$, since the boundary is by definition a manifold of one dimension lower than the original manifold \citep[Proposition on p.~59]{guillemin19874differential}. If $F$ and $G$ intersect under part (i) or (ii) of Assumption \ref{supportass}, it follows from the transversality theorem \citet[p.~68]{guillemin19874differential} that their intersection generically is a lower-dimensional manifold.\\

\noindent\emph{Part 2:} Now let us focus on the second part of the lemma, the dynamics.
The first thing to show here is that the cyclically monotone map cannot ``pass through'' intersections of two quasi-concave distribution functions. ``Not passing through'' means if at $x_0\in\mathcal{X}$ it holds that $G(x_0)>F(x_0)>0$, then it must hold that $G(Tx_0)\geq F(Tx_0)$. To see this in our case, let $I_G(x_0)$ be the isoquant for $x_0$ and assume without loss of generality that $x_0$ lies outside the closed convex $I^{\uparrow}_F(x^*)$, where $I^{\uparrow}_F(x^*)$ is the epigraph of the isoquant $I_F(x^*)$ which intersects $I_G(x_0)$ at $x^*$. This is without loss of generality since we would otherwise just change the roles of $F$ and $G$ and use the inverse $T^{-1}$. Moreover, suppose by contradiction that it holds $G(x_0)>F(x_0)$ but $G(Tx_0)<F(Tx_0)$. Since the isoquants are all convex, it would follow that $Tx_0$ lies inside $I^{\uparrow}_F(x_0)$ (recall the proof of the previous lemma and Figure \ref{brenierfigure}). But then $T$ cannot be the metric projection of $x_0$ onto this epigraph as in Lemma \ref{brenierismetricprojection}, because a metric projection does not map \emph{inside} the convex set, but on its boundary, a contradiction.

With this in hand, we just need to show that repeated applications of the Brenier map converge to a point $x\in\mathcal{I}(F,G)$, because we now know that it cannot pass through these points of intersection.
We thus want to show that for each $x\in\mathcal{X}$ either $\lim_{n\to\infty} T^nx=x_m$ or $\lim_{n\to\infty}T^{-n}x=x_m$ for some $x_m\in\mathcal{I}(F,G)$, where
\[T^nx = \underbrace{T(T(T\ldots Tx))}_{\text{$n$ times}}\qquad\text{and}\qquad T^{-n}= \underbrace{T^{-1}(T^{-1}(T^{-1}\ldots T^{-1}x))}_{\text{$n$ times}}.\] So pick some $x\in\mathcal{X}$. If $F(x) = G(x)$ there is nothing to prove as by Lemma \ref{brenierismetricprojection} it holds that $T^nx=x$ for $n\in\mathbb{N}$, so that this $x$ already has converged and must lie in $\mathcal{I}(F,G)$. Now depending on the interplay between $\mathcal{I}(F,G)$ and $x$ as well as $G(x)$ and $F(x)$, we have to decide if we iterate $T$ or $T^{-1}$ to find convergence. The underlying idea is that we always iterate such that the sequence converges towards $\mathcal{I}(F,G)$.

Let us go through all cases. Consider part (i) of Assumption \ref{supportass} first and assume without loss of generality that the two distributions do not intersect on the interior; if they did, this intersection would be generically lower-dimensional by the transversality theorem and some of the iterations would not end at a point on the boundary but at some point on the intersection. The union of the boundary and the set of intersection will still be of Lebesgue measure zero and our approach will still work. Let $\mathcal{X}$ be bounded above first and assume $G(x)> F(x)$ for all $x\in\mathcal{X}$ without loss of generality. Then we iterate $T$, i.e.~from $G$ to $F$. The reason is the same as in the one-dimensional case as depicted in Figure \ref{sequenceexample2}.
\begin{figure}[h!t]
\centering
\begin{tikzpicture}
\draw[->, thick] (0,0) to (0,4);
\draw[->,thick] (0,0) to (4,0);
\draw[-,thick] (0,0) to [out=80,in=180] (3.7,3.7);
\draw[thick] (0,0) to [out=5,in=220] (3.7,3.7);
\draw[-,thick] (0,3.7) to (-0.1,3.7);
\draw node[left] at (0,3.7) {$1$};
\draw node[right] at (4,0) {$x$};
\draw[->,thick] (3.7,1.2) to[bend right] (2.7,2.2);
\node[below] at (3.7,1.2) {$F$};
\draw[-,thick] (0.5,0) to (0.5,-0.1);
\draw node[below] at (0.5,-0.1) {$x_0$};
\draw[-,dashed] (0.5,0.1) to (0.5,1.47);
\draw[-,thick] (0,0.1) to (-0.1,0.1);
\draw node[left] at (-0.1,0.1) {$F(x_0)$};
\draw[-,dashed] (0.52,1.5) to (2.05,1.5);
\draw[-,thick] (2.05,0) to (2.05,-0.1);
\draw node[left] at (-0.1,1.5) {$F(Tx_0)=G(x_0)$};
\draw node[below] at (2.05,-0.1) {$Tx_0$};
\draw[-,thick] (0,1.5) to (-0.1,1.5);
\draw[-,dashed] (2.05,1.5) to (2.05,3.28);
\draw[-,dashed] (2.05,3.28) to (3.28,3.28);
\draw[-,dashed] (3.28,3.28) to (3.28,3.7);
\draw node[below] at (3.28,-0.1) {$T^2x_0$};
\draw[-,thick] (3.28,0) to (3.28,-0.1);
\draw[-,thick] (0,3.28) to (-0.1,3.28);
\draw node[left] at (-0.1,3.28) {$F(T^2x_0)=G(Tx_0)$};
\end{tikzpicture}
\caption{Sequence for a point $x_0$ in the one-dimensional case under Assumption \ref{supportass} (i) where $\mathcal{X}$ is bounded above.}
\label{sequenceexample2}
\end{figure}
We claim that iterating $T$ converges to some $x_m$ on the (finite) boundary of $\mathcal{X}$. To see this, note that by the above reasoning we know that $T$ cannot ``pass through intersections'', i.e.~we have
\[G(Tx)\geq F(Tx).\] This means that if we start a sequence at $x$ with $F(x)$, it will be that $G(x)>F(x)$, so that we can switch to $G(x)$, which will give us a higher probability. ``Switching to $G$'' is applying the mapping $F(x)\mapsto G(x)$. Now we apply the map $T$ which gives us the point $Tx$ with $G(x)=F(Tx)$. Note that $F(Tx)>F(x)$ by strict quasi-concavity. Now again switch to $G(Tx)$, which satisfies $G(Tx)\geq F(Tx)$ by our reasoning before, and apply the map $T$ to the point $Tx$ again, which gives the point $T^2x$ satisfying $F(T^2x)=G(Tx)$. Change $F$ to $G$ to get $G(T^2x)\geq F(T^2x)$ and so forth.

It is the same idea as in the univariate case depicted in Figure \ref{sequenceexample2}, only using the cyclically monotone map $T$ instead of the ``horizontal'' map. The ``vertical map'' in Figure \ref{sequenceexample2} is exactly the same in our multivariate setting. Hence, for all $n$ we have the relation
\[G(T^{n}x)=F(T^{n+1}x).\] We now want to take the limit as $n\to\infty$. For this, note that the constructed sequence $\{F(T^nx)\}_{n\in\mathbb{N}}$ is monotonically increasing by quasi-concavity. This means that this sequence is bounded above by $F(\overline{x})=1$ for some $\overline{x}$ in the (finite) boundary of $\mathcal{X}$. In fact, recall that every strictly increasing and bounded above sequence converges to its supremum value, which means that as $n\to\infty$ it holds that $F(T^nx)\to F(\overline{x})$ for some finite $\overline{x}$ in the boundary. Furthermore, under the assumption that $F$ is absolutely continuous, we can pull out the limit to obtain
\[\lim_{n\to\infty}F(T^{n+1}x)=F(\lim_{n\to\infty}T^{n+1}x) = F(\overline{x}).\]

This reasoning works in particular for convex $\mathcal{X}$, because the range of the (extension of the) cyclically monotone maps $T$ and $T^{-1}$ are $\mathcal{X}$, so that $T$ never maps outside $\mathcal{X}$.
The same reasoning holds if  $\mathcal{X}$ is bounded below and $G(x)> F(x)$ for all $x\in\mathcal{X}$. In this case we simply iterate $T^{-1}$, which is also a metric projection and satisfies the same properties as $T$ (Theorem 2.12 in combination with the argumentation on page 80 in \citeauthor*{villani2003topics} \citeyear{villani2003topics}).

Finally, if the support of $\mathcal{X}$ is infinite and satisfies part (ii) of Assumption \ref{supportass}, then the same reasoning holds. This case is depicted in Figure \ref{sequenceexample}.
\begin{figure}[h!t]
\centering
\begin{tikzpicture}
\draw[->, thick] (0,0) to (0,4);
\draw[->,thick] (0,0) to (4,0);
\draw[-,thick] (0,0) to [out=15,in=180] (4,4);
\draw[thick] (0,0) to [out=90,in=220] (4,3.5);
\draw node[right] at (4,0) {$x$};
\draw[-,thick] (0.5,0) to (0.5,-0.1);
\draw node[below] at (0.5,-0.1) {$x_0$};
\draw[-,dashed] (0.5,0.2) to (0.5,1.38);
\draw[-,thick] (0,0.2) to (-0.1,0.2);
\draw node[left] at (-0.1,0.2) {$F(x_0)$};
\draw[-,dashed] (0.52,1.38) to (1.5,1.38);
\draw[-,thick] (1.55,0) to (1.55,-0.1);
\draw node[left] at (-0.1,1.32) {$F(Tx_0)=G(x_0)$};
\draw node[below] at (1.55,-0.1) {$Tx_0$};
\draw[-,thick] (0,1.32) to (-0.1,1.32);
\draw[-,dashed] (1.55,1.38) to (1.55,2.1);
\draw[-,dashed] (1.55,2.1) to (1.9,2.1);
\draw[-,dashed] (1.9,2.1) to (1.9,2.28);
\draw[-,thick] (0,2.1) to (-0.1,2.1);
\draw node[left] at (-0.1,2.1) {$F(T^2x_0)=G(Tx_0)$};
\draw[->,thick] (3,1.9) to (2.1,2.28);
\draw node[right] at (3,1.8) {$\mathcal{I}(F,G)$};
\draw node[right] at (2.5,0.5) {$F$};
\draw[->,thick] (2.5,0.6) to (1.43,1.17);
\draw node[right] at (4.2,2.55) {$G$};
\draw[->,thick] (4.2,2.7) to (3.45,3);
\end{tikzpicture}
\caption{Sequence for a point $x_0$ in the one-dimensional case under Assumption \ref{supportass} (ii) where $\mathcal{X}_z$ is allowed to be unbounded. In this picture we assume that $F$ and $G$ only intersect above at $+\infty$ and below at $-\infty$ besides $\mathcal{I}(F,G)$.}
\label{sequenceexample}
\end{figure}
For points $x$ with $F(x)< G(x)\leq\beta$ or $\alpha\leq G(x)< F(x)$ we iterate $T$, and for points with $G(x)< F(x)\leq\beta$ or $\alpha\leq F(x)< G(x)$, we iterate $T^{-1}$. Those iterations will converge to some $x\in\mathcal{I}(F,G)$.

To see this, pick the case $F(x)< G(x)\leq\beta$ as the others are perfectly analogous. Then by assumption, it holds for all $x\in\mathcal{X}$ with $G(x)>\alpha$ that $F(x)> G(x)$. Now define the function $\Gamma(x)\coloneqq G(x)-F(x)$. By construction $\Gamma(x)$ is a continuous function for all $x\in\mathcal{X}$ since $F$ and $G$ are absolutely continuous. Furthermore, $\Gamma(\underline{x})>0$ and $\Gamma(\overline{x})<0$ for any $\underline{x}$ with $F(\underline{x})\leq\beta$ and any $\overline{x}$ with $G(\overline{x})>\alpha$. Fix some arbitrary continuous curve $c(\underline{x},\overline{x})\subset\mathcal{X}$ connecting $\underline{x}$ and $\overline{x}$. Then since $c$ is continuous there must be some $x_0\in c(\underline{x},\overline{x})$ such that $\Gamma(x_0)=0$ by the intermediate value theorem. In fact $\Gamma\circ c(\underline{x},\overline{x})$ is continuous and $\Gamma(\underline{x})>0$ and $\Gamma(\overline{x})<0$, so that the intermediate value theorem immediately guarantees the existence of at least one $x_0\in c(\underline{x},\overline{x})$ for which $\Gamma(x_0)=0$, i.e.~$G(x_0)=F(x_0)$. Since the curve $c$ and the points $\underline{x}$ and $\overline{x}$ we arbitrary, this implies that there must be a manifold $\mathcal{I}(F,G)$ which separates points with $ F(x)\leq\beta$ from points with $F(x)>\alpha$.

Now the argument is exactly as in the case (i). In fact, iterating $T$ from some point $\underline{x}$ will converge to some point on $\mathcal{I}(F,G)$. Similarly, iterating $T^{-1}$ for some point $\overline{x}$ will also converge towards some point $\mathcal{I}(F,G)$, as depicted in Figure \ref{sequenceexample}. The same holds for all other cases. Since $\mathcal{I}(F,G)$ is generically a proper submanifold of $\mathbb{R}^k$ it is nowhere dense, so that we identify each point $x$ up to the nowhere dense set $\mathcal{I}(F,G)$. We have covered all cases.
\end{proof}

\subsection{Proof of Theorem \ref{maintheorem2}}
\begin{proof}
The proof is split into three parts. The first part defines the identified set and is analogous to the proof in \citet{torgovitsky2015identification}. In the second part, we generalize the idea of the measure-preserving isomorphism from the univariate to the multivariate setting in our notation. The third part follows from Lemma \ref{dynamicslemma}, where we prove the existence of the fixed set if we iterate the cyclic monotone map under our assumptions. \\

\noindent\emph{Part 1:} In this part, we show that the identified set is
\[\mathbb{I}\coloneqq \{\text{$m$ satisfies Assumption \ref{mpiso2}}:(\varepsilon_m,U)\independent Z,\thickspace \text{for all $\varepsilon_m\in m^{-1}(Y,X)$}\}.\] This is the set of production functions which generate the observed joint distribution $F_{Y,X|Z}$. The proof of this is essentially the same as the original proof in \cite{torgovitsky2015identification}, only generalized to non-invertible $m$ and using Assumption \ref{mpiso} (2) which implies that the distribution of $U$ is known instead of using a control-function approach. If $m$ is in the identified set, then $Y=m(X,\varepsilon_m)$ for some $\varepsilon_m$ with $(\varepsilon_m,U)\independent Z$. Now under Assumption \ref{mpiso2} (i) $m$ is a measure preserving map, but not necessarily invertible. For any $Y$ there exists a set $\{\varepsilon_m\}_{m\in M}$ for which all $\varepsilon_m = m^{-1}(X,Y)$. It therefore holds that $(\varepsilon_m,U)\independent Z$ for all of these $\varepsilon_m$, so that $m\in\mathbb{I}$. On the other hand, if $m\in\mathbb{I}$, then $(\varepsilon_m,U)\independent Z$ for all $\varepsilon_m\in m^{-1}(X,Y)$, then $m$ is in the identified set.\\

\noindent\emph{Part 2:}
We want to show identification, so that the assume there is $m:\mathcal{E}\to\mathcal{Y}_x$ as well as $m^*:\mathcal{E}\to\mathcal{Y}_x$ with corresponding unobservable distributions $F_\varepsilon$ and $F_{\varepsilon^*}$ so that $(m,\varepsilon)$ and $(m^*,\varepsilon^*)$ are observationally equivalent in the sense of \citet{matzkin2003nonparametric}. The goal is to show that, generically, $m^*(x,e)=m(x,e)$ for almost every $x\in\mathcal{X}$ and $e\in\mathcal{E}$, so that $\mathbb{I}$ contains a unique $m$.  We define the equivalence of $m$ and $m^*$ up to the set $\mathcal{I}(z,z')$ where $F_{X|Z=z}$ and $F_{X|Z=z'}$ intersect, which under our assumptions and the previous lemmas will be of measure zero.

To prove point-identification, we proceed analogously to \citet{torgovitsky2015identification} and introduce the measure preserving map \[q(x,z,\cdot)=q(x,\cdot)\coloneqq m^{-1}(x,m^*(x,\cdot)).\] Note that $q$ does not depend on $z$ since $Z$ is an instrument. In fact, it follows from the exclusion restriction of model \eqref{maineq} that $m$ is not a function of $Z$; this, in combination with the independence of $\varepsilon$, implies that $z$ does not have a direct influence on $q$. We denote this by writing $q(x,\varepsilon)$. From now on we will work with $q$.

If $m$ were invertible in $\varepsilon$ then point-identification could be shown by showing that $q(e)$ is actually the identity, i.e.~$q(x,e)=e$ for almost all $x\in\mathcal{X}_z\cup\mathcal{X}_{z'}$ and $e\in\mathcal{E}$. However, since $m$ need not be invertible, we want to show that
\begin{equation}\label{qidenteq}
e\in q(x,e)\qquad\text{for almost all $x$ and $e$}.
\end{equation} This implies that $m(x,e)=m^*(x,e)$ for almost every $x$ and $e$. To see this suppose by contradiction that $y = m(x,e)\neq m^*(x,e)=y'$ for some $x$ and $e$. Then $e\not\in m^{-1}(x,y')$ because $m(x,\cdot)$ is a function. This implies that $e\not\in q(x,e)$.

To show \eqref{qidenteq}, it is actually sufficient to show that it is not a function of $x$. In fact, since $m(x,\cdot)$ is identified for exogenous $X$ under Assumption \ref{mpiso2} (1) this means that it is the unique measure preserving map between $P_\varepsilon$ and $P_{Y|X=x}$ for almost all $x$. Therefore, there can be no other $m(x,\cdot)$ of this functional form for each $x$. But under the normalization for $\bar{x}$ required in Assumption \ref{mpiso2} (5), it must then be that $m=m^*$. This is the same reasoning as in \cite{torgovitsky2015identification}, only for more general functions. 
\begin{figure}[h!t]
\centering
\begin{tikzpicture}
\draw node[below] at (1.5,0) {$P_{\varepsilon^*|X=x,Z=z}=P_{\varepsilon^*|U=h^{-1}(x,z),Z=z}$};
\draw node[above] at (-0.5,2.5) {$P_{Y|X=x,Z=z}$};
\draw node[above] at (7.2,2.4) {$P_{\varepsilon|X=x,Z=z}=P_{\varepsilon|U=h^{-1}(x,z),Z=z}$};
\draw node[left] at (0,1.25) {$m^*(x,\cdot)$};
\draw node[above] at (2.75,2.75) {$m(x,\cdot)$};
\draw node[right] at (2.4, 0.9) {$q(x,z,\cdot)=q(h^{-1}(x,z),\cdot)$};
\draw[->,thick] (0,0.15) to (0,2.5);
\draw[<-,thick] (0.85,2.75) to (4.25,2.75);
\draw[->,thick] (0.6,-0.05) to (4.5,2.5);
\end{tikzpicture}
\caption{Underlying morphism structure}
\label{isostructure}
\end{figure}

In the univariate case \cite{torgovitsky2015identification} uses the measure preserving isomorphism $(x,z)\mapsto (F_{X|Z=z}(x),z)$ to condition $\varepsilon$ on $F_{X|Z}$ instead of $X,Z$ and then applies the monotone rearrangement $T(x)=F^{-1}_{X|Z=z'}(F_{X|Z=z}(x))$ as a map between $F_{X|Z=z}$ and $F_{X|Z=z'}$; this ensures that for every point $x_0\in\mathcal{X}_z\cup\mathcal{X}_{z'}$ $F_{\varepsilon|X=x_0,Z=z'}=F_{\varepsilon|X=Tx_0,Z=z}$ and analogously for $\varepsilon^*$, so that $q$ is the same for all iterations $T^n$. He then shows that in one dimension this iteration converges to a finite set of fixed points---which is a set of measure zero---and can hence show that $q$ is constant for all starting points $x_0$ up to this set of measure zero, which by the assumed normalization implies that $m=m^*$ almost everywhere.

As mentioned in the main text, there are mainly two reasons for why this simple reasoning does not work in a higher dimensional setting. First, the map $(x,z)\mapsto (F_{X|Z=z}(x),z)$ is only invertible in the one-dimensional case, and we need to define an analogue in our multivariate setting. Second, a general sequencing argument is more intricate in higher dimensions.

We now provide a different proof from the one in the main text that shows that the map $(x,z)\mapsto (h^{-1}(x,z),z)$ is the required measure reserving isomorphism. The idea is to consider $P_{\varepsilon|X=x,Z=z}$ and $P_{\varepsilon|U=h^{-1}(x,z),Z=z}$ to be disintegrations \citep[Chapter 10]{bogachev2007measure2}. Note that these disintegrations exist and coincide with the abstract conditional expectations, because we work in Euclidean space and with absolutely continuous distribution functions \citep[Theorem 1]{chang1997conditioning}. The map
\[\phi_{x,z}(\varepsilon): (\varepsilon,X,Z)\mapsto(\varepsilon,h^{-1}(X,Z),Z)\] is a measure preserving isomorphism since $h^{-1}$ is a measure preserving isomorphism for all $z\in\mathcal{Z}$, just as the identity maps $\varepsilon\mapsto\varepsilon$ and $Z\mapsto Z$. But then by Corollary 5.24 in \citet{einsiedler2013ergodic} the disintegrated measure $P_{\varepsilon|U=h^{-1}(x,z),Z=z}$ is the pushforward of the disintegrated measure $P_{\varepsilon|X=x,Z=z}$, that is
\[P_{\varepsilon|U=h^{-1}(x,z),Z=z}(E) = P_{\varepsilon|\phi_{x,z}}(E) = P_{\varepsilon|X=x,Z=z}(\phi_{x,z}^{-1}(E)) = P_{\varepsilon|X=x,Z=z}(E),\] for all $E\subset\mathscr{B}_{\mathbb{R}^d}$. Here, the first equality follows by the definition of $\phi_{x,z}$, the second equality follows by the definition of a pushforward measure as in Corollary 5.24 of \citet{einsiedler2013ergodic}, and the third equality follows from the fact that $\phi_{x,z}$ maps $\varepsilon$ to $\varepsilon$.

This shows that $(x,z)\mapsto (h^{-1}(x,z),z)$ is a measure-preserving isomorphism for cyclically monotone $h$, just as $(x,z)\mapsto (F_{X|Z=z}(x),z)$ is in the univariate case. Now let us apply Lemma \ref{dynamicslemma} for the last part.\\

\noindent\emph{Part 3:} From now on, denote by $\mathcal{I}_{z,z'}$ the intersection manifold between $F_{X|Z=z}$ and $F_{X|Z=z'}$. This is the same as $\mathcal{I}(F,G)$ in the general case, we just use this notation to reflect the fact that we work with distributions $F_{X|Z=z}$ and $F_{X|Z=z'}$.

Assumption \ref{mpiso2} (2) implies a set-valued version of continuity in probability of $q$ in $x$ which we need to conclude the proof as we show below. We want to show that for every sequence $\{x_n\}\subset\mathcal{X}$ converging to some $x\in\mathcal{X}$
\begin{equation}\label{wtsq}
\lim_{n\to\infty} P\left(q(x_n,\varepsilon)\Delta q(x,\varepsilon)\right)=0,
\end{equation}
where $A\Delta B = A\setminus B\cup B\setminus A = (A\cup B)\setminus (A\cap B)$ is the symmetric difference.
To show this, we consider, for every $y\in\mathbb{R}^d$ and $x\in\mathbb{R}^k$, the inverse correspondences
\begin{align*}
m^{-1}(x,y)\coloneqq \left\{e\in\mathcal{E}: m(x,e)=y\right\}\quad\text{and}\quad m_\eta^{-1}(x,y)\coloneqq \left\{e\in\mathcal{E}: \left\lvert m(x,e)-y\right\rvert\leq \eta\right\}.
\end{align*}

Writing out \eqref{wtsq} in terms of the inverse correspondences we want to show that for every sequence $\{x_n\}_n\subset\mathcal{X}$ converging to some $x\in\mathcal{X}$
\[\lim_{n\to\infty} P\left(m^{-1}_\eta(x_n,m^*(x_n,\varepsilon)) \Delta m^{-1}(x,m^*(x,\varepsilon))\right)=0\] for an appropriate choice of $\eta>0$.
The reason for why we introduce the $\eta$-relaxations of $m^{-1}(x_n,y)$ is that we can not guarantee that the inverse correspondences $m^{-1}(x_n,y)$ converge at the same rate as the functions $m(x,\varepsilon)$: say
\begin{equation}\label{seq1}
P\left(\sup_{e\in\mathcal{E}}\|m(x_n,e) - m(x,e)\|\geq \delta_n\right)\to 0
\end{equation} for some sequence $0<\delta_n\to0$, which follows from Assumption \ref{mpiso2} (2). We can only show that the inverse correspondences converges slightly slower, i.e.
\[P(m^{-1}_{\eta_n}(x_n,Y) \Delta m^{-1}(x,Y))\to 0\] for any $0<\eta_n\to0$ with $\frac{\delta_n}{\eta_n}\to0$.

To show this, we fix a sequence $0<\delta_n\to0$ with \eqref{seq1} and bound
\begin{align*}
&P\left(m^{-1}_{\eta_n}(x_n,m^*(x_n,\varepsilon)) \Delta m^{-1}(x,m^*(x,\varepsilon))\right)\\
\leq& P\left(m^{-1}_{\eta_n}(x_n,m^*(x_n,\varepsilon)) \Delta m^{-1}(x,m^*(x_n,\varepsilon))\right) +\\
&\hspace{1.5cm} P\left(m^{-1}(x,m^*(x_n,\varepsilon)) \Delta m^{-1}(x,m^*(x,\varepsilon))\right),
\end{align*}
where we consider $\eta_n$ as above. We consider each term separately. For the second term, note that \eqref{seq1} in combination with Assumption \ref{mpiso2} (2) implies
\[P(m(x_n,\varepsilon) \Delta m(x,\varepsilon))\to 0,\] and hence
\[P\left(m^{-1}(x,m^*(x_n,\varepsilon)) \Delta m^{-1}(x,m^*(x,\varepsilon))\right)\to0\] as $n\to\infty$.

To analyze the first term, write
\begin{align*}
&P\left(m^{-1}_{\eta_n}(x_n,m^*(x_n,\varepsilon)) \Delta m^{-1}(x,m^*(x_n,\varepsilon))\right)\\
=& P\left(m^{-1}_{\eta_n}(x_n,m^*(x_n,\varepsilon)) \setminus m^{-1}(x,m^*(x_n,\varepsilon))\cup m^{-1}(x,m^*(x_n,\varepsilon)) \setminus m^{-1}_{\eta_n}(x_n,m^*(x_n,\varepsilon))\right)\\
\leq &P\left(m^{-1}_{\eta_n}(x_n,m^*(x_n,\varepsilon)) \setminus m^{-1}(x,m^*(x_n,\varepsilon))\right)+ P\left(m^{-1}(x,m^*(x_n,\varepsilon)) \setminus m^{-1}_{\eta_n}(x_n,m^*(x_n,\varepsilon))\right)
\end{align*}
and consider each term separately and start with the second.

Denote $y_n\coloneqq m^*(x_n,\varepsilon)$ and note that \citep{camilli1999note}
\[m^{-1}(x,y_n) \setminus m^{-1}_{\eta_n}(x_n,y_n) = \left\{e\in\mathcal{E}: m(x,e) = y_n\thickspace\text{and}\thickspace \left\| m(x_n,e) - y_n\right\| >\eta_n\right\}.\]
It follows from this that for any sequence $\{y_n\}_n$
\[m^{-1}(x,y_n) \setminus m^{-1}_{\eta_n}(x_n,y_n)\subset \left\{e\in\mathcal{E}: \left\| m(x_n,e) - m(x,e) \right\| >\eta_n\right\}.\] Hence picking $\eta_n\to0$ such that $\frac{\delta_n}{\eta_n}\to0$ it holds from \eqref{seq1} that
\[P(m^{-1}(x,Y_n) \setminus m^{-1}_{\eta_n}(x_n,Y_n))\leq P\left(\sup_{e\in\mathcal{E}}\|m(x_n,e) - m(x,e)\|\geq \delta_n\right)\to 0\] as $n\to\infty$.

Now consider the first term
\[m^{-1}_{\eta_n}(x_n,m^*(x_n,\varepsilon)) \setminus m^{-1}(x,m^*(x_n,\varepsilon))= \left\{e\in\mathcal{E}: m(x,e) \neq y_n\thickspace\text{and}\thickspace \left\| m(x_n,e) - y_n\right\| \leq \eta_n\right\}.\] Recall that $\{y_n\}_n$ is some sequence that converges to some $y$ as $n\to\infty$. Hence,
\begin{multline*}
\limsup_{n\to\infty} \left\{e\in\mathcal{E}: m(x,e) \neq y_n\thickspace\text{and}\thickspace \left\| m(x_n,e) - y_n\right\| \leq \eta_n\right\} \\= \left\{e\in\mathcal{E}: \|m(x,e) - y_n\|>0 \thickspace\text{and}\thickspace \lim_{n\to\infty} \|m(x_n,e) - y_n\|=0 \right\}.
\end{multline*}
We hence have
\begin{align*}
&\limsup_{n\to\infty}  P\left(\left\{e\in\mathcal{E}: m(x,e) \neq y_n\thickspace\text{and}\thickspace \left\| m(x_n,e) - y_n\right\| \leq \eta_n\right\}\right)\\
\leq&P(\limsup_{n\to\infty} P\left(\left\{e\in\mathcal{E}: m(x,e) \neq y_n\thickspace\text{and}\thickspace \left\| m(x_n,e) - y_n\right\| \leq \eta_n\right\}\right)\\
&=P(\left\{e\in\mathcal{E}: \|m(x,e) - y_n\|>0 \thickspace\text{and}\thickspace \lim_{n\to\infty} \|m(x_n,e) - y_n\|=0 \right\}).
\end{align*}
But since \eqref{seq1} holds, this implies that
\[P\left(\left\{e\in\mathcal{E}: \|m(x,e) - y_n\|>0 \thickspace\text{and}\thickspace \lim_{n\to\infty} \|m(x_n,e) - y_n\|=0 \right\}\right)=0.\] Together, this implies that
\[P\left(m^{-1}_{\eta_n}(x_n,m^*(x_n,\varepsilon)) \Delta m^{-1}(x,m^*(x,\varepsilon))\right)\to0\] as $n\to\infty$, which implies \eqref{wtsq}.

Now recall from part 2 that as soon as $q(x,\varepsilon)$ is a function which does not depend on $x$ for almost every $(x,e)\in\mathcal{X}\times\mathcal{E}$, we know that we have point-identification of $m$ for almost every $(x,e)$. In order to prove that $q$ is constant for all $x$ up to $\mathcal{I}(z,z')$, we use Lemma \ref{dynamicslemma}, which guarantees that all iterations converge to some point on $\mathcal{I}(z,z')$.

We are now in the position to conclude the proof by showing that $q(x,e)$ does not depend on $x$ modulo the set $\mathcal{I}(z,z')$. So pick some $x_0\in\mathcal{X}$ and assume without loss of generality that we can iterate $T$ for $T^nx_0$ to converge to some $x_m\in\mathcal{I}(z,z')$ as $n\to\infty$, which must be true by Lemma \ref{dynamicslemma}. We have
\begin{align*}
P_{\varepsilon|X=x_0,Z=z}(E') &=P_{\varepsilon|U=h^{-1}(x_0,z),Z=z}(E') \\
&= P_{\varepsilon|U=h^{-1}(x_0,z)}(E') \\
&= P_{\varepsilon|U=h^{-1}(Tx_0,z')}(E') \\
&= P_{\varepsilon|U=h^{-1}(Tx_0,z'),Z=z'}(E')  \\
&= P_{\varepsilon|X=Tx_0,Z=z'}(E')\\
&=P_{\varepsilon|X=Tx_0,Z=z'}(E').
\end{align*}
for every Borel set $E'\in\mathscr{B}_{\mathcal{E}_{x_0}}$. The first equality follows from the fact that the map $(x,z) \mapsto (h^{-1}(x,z),z)$ is a measure-preserving isomorphism. The second equality follows from the exclusion restriction of $Z$, which means that conditionally on $U$, $Z$ is independent of $\varepsilon$. This is the analogous map to the ``vertical map'' in the univariate case. The third equality follows from the metric projection Brenier map, which is the analogue to the ``horizontal map'' in the univariate case. The other equalities follow from the same reasoning. If we iterate $T$, this chain of equality stays the same. Recall that $ \mathscr{B}_{\mathcal{E}_{x_0}}= \mathscr{B}_{\mathcal{E}}$ by the assumption of the support $\mathcal{E}$ in Assumption \ref{mpiso2}, so that this holds for all $x_0$.

Now to conclude the proof of Theorem \ref{maintheorem2} recall that $q$ is measure preserving and maps each $E'\in\mathscr{B}_{\mathcal{E}}$ onto a unique $q(x_0,E')$. Thus as
$P_{\varepsilon|X=x_0,Z=z}(E') = P_{\varepsilon|X=Tx_0,Z=z}(E'),$ it must hold that $q(x_0,E')  =  q(Tx_0,E')$. This holds if we iterate $T$ by what we have just shown, so that
\[q(T^nx_0,E') = \cdots = q(x_0,E')\quad\text{for all}\thickspace\medspace E'\in\mathscr{B}_{\mathcal{E}}\thickspace\thickspace\text{and}\thickspace n\in\mathbb{N}.\] This is the analogous idea to the univariate case from \citet{torgovitsky2015identification}. Now from Lemma \ref{dynamicslemma} we know that the iteration $T^nx_0$ converges to some element $x_m\in\mathcal{I}(z,z')$ and since $q(\cdot,E')$ satisfies \eqref{wtsq} we have
\begin{equation}\label{continuityequation}
\lim_{n\to\infty} P\left(q(T^nx_0,\varepsilon)\Delta q(x_m,\varepsilon)\right)=0.
\end{equation}
This implies that $q(x_m,\cdot)=q(x_0,\cdot)$ up to a set of measure zero in $\mathcal{E}$. Therefore $q(\cdot,E')$ is constant on $\mathcal{X}$ for almost every $\varepsilon$ and modulus $\mathcal{I}(z,z')$, which by Lemma \ref{dynamicslemma} generically is of measure zero.
\end{proof}

\subsection{Proof of Proposition \ref{asymptoticprop}}
\begin{proof}
Parts of the proof are similar to the proof of Lemma C1 in \citet{masten2020inference}. In the following, whenever we write $x$ without a $j$ subscript, it is considered to be a vector in $\mathbb{R}^k$, i.e.~$x\equiv(x_1,\ldots,x_k)\in\mathbb{R}^k$. Hence, expressions like $X\leq x$ are understood in the classical element-wise partial order.
We first start by obtaining the asymptotic distribution of the empirical process $\sqrt{n}\left(\hat{D}(x)-D(x)\right)$ for any $x\in\mathcal{X}$ and $z\in\{z_0,z_1\}$. The result then follows from the continuous mapping theorem. \\

\noindent{Step 1:} We perform a first-order Taylor expansion of
\[\hat{D}_n(x)-D(x)=\hat{F}_{X|Z=z;n}(x)-F_{X|Z=z}(x)\coloneqq \frac{\frac{1}{n}\sum_{i=1}^n \prod_{j=1}^k\mathds{1}(X_{ij}\leq x_j)\mathds{1}(Z_i=z)}{\frac{1}{n}\sum_{i=1}^n\mathds{1}(Z_i=z)}-F_{X|Z=z}(x)\]
around its population counterpart. This Taylor expansion takes the form
\begin{multline*}
\hat{F}_{X|Z=z;n}(x)-F_{X|Z=z}(x) = \frac{\frac{1}{n}\sum_{i=1}^n\prod_{j=1}^k\mathds{1}(X_{ij}\leq x_j)\mathds{1}(Z_i=z)-P(X\leq x, Z=z)}{P(Z=z)}\\-\frac{F_{X|Z=z}(x)}{P(Z=z)}\left(\frac{1}{n}\sum_{i=1}^n\mathds{1}(Z_i=z)-P(Z=z)\right)\\+O_p\left(\left(\frac{1}{n}\sum_{i=1}^n\prod_{j=1}^k\mathds{1}(X_{ij}\leq x_j)\mathds{1}(Z_i=z)-P(X\leq x, Z=z)\right)^2\right)\\+O_p\left(\left(\frac{1}{n}\sum_{i=1}^n\prod_{j=1}^k\mathds{1}(X_{ij}\leq x_j)\mathds{1}(Z_i=z)-P(X\leq x, Z=z)\right)\left(\frac{1}{n}\sum_{i=1}^n \mathds{1}(Z_i=z)-P(Z=z)\right)\right)\\+O_p\left(\left(\frac{1}{n}\sum_{i=1}^n \mathds{1}(Z_i=z)-P(Z=z)\right)^2\right).
\end{multline*}
Note that the function class $\{\prod_{j=1}^k\mathds{1}(X_{j}\leq x_j)\mathds{1}(Z=z):(x,z)\in\mathcal{X}\times\{z_0,z_1\}\}$ are Donsker, because indicator functions are Donsker and the product of indicator functions is a Lipschitz function, so that by Example 2.10.8 in \citet{wellner2013weak} the class of products of indicator functions is Donsker. This implies that all $O_P$ terms are $O_P(\frac{1}{n})$ and hence $o_P(\frac{1}{\sqrt{n}})$.
We can hence write the Taylor expansion as
\[\hat{F}_{X|Z=z;n}(x)-F_{X|Z=z}(x) = \frac{1}{n}\sum_{i=1}^n\frac{\mathds{1}(Z_i=z)\left(\prod_{j=1}^k\mathds{1}(X_{ij}\leq x_j)-F_{X|Z=z}(x)\right)}{P(Z=z)}+o_P(\frac{1}{\sqrt{n}}).\]

Just as above, by Example 2.10.8 the function class
\[\left\{\frac{\mathds{1}(Z_i=z)\left(\prod_{j=1}^k\mathds{1}(X_{ij}\leq x_j)-F_{X|Z=z}(x)\right)}{P(Z=z)}:(x,z)\in\mathcal{X}\times\{z_0,z_1\}\right\}\] is Donsker, so that
$\sqrt{n}(\hat{F}_{X|Z=z;n}(x)-F_{X|Z=z}(x))$ converges weakly to a Gaussian process $\mathcal{G}(x,z)$ with uniformly continuous paths. \\

\noindent{Step 2:} We now compute the covariance kernel of this Gaussian limit process. For this note that for any $z\in\{z_0,z_1\}$ and $x\in\mathcal{X}$
\[E\left[\frac{\mathds{1}(Z=z)\left(\prod_{j=1}^J\mathds{1}(X_{ij}\leq x_j)-F_{X|Z=z}(x)\right)}{P(Z=z)}\right]=0.\]
We then have for any $x,x'\in\mathcal{X}$ and $z,z'\in\{z_0,z_1\}$
\begin{align*}
&CoV(\mathcal{G})(x,x',z,z')\\
=&E\left[\frac{\mathds{1}(Z_i=z)\left(\prod_{j=1}^k\mathds{1}(X_{ij}\leq x_j)-F_{X|Z=z}(x)\right)\mathds{1}(Z_i=z')\left(\prod_{j=1}^k \mathds{1}(X_{ij}\leq x'_j)-F_{X|Z=z'}(x')\right)}{P(Z=z)P(Z=z')}\right]\\
=&E\left[\frac{\prod_{j=1}^k\mathds{1}(X_{ij}\leq x_j)\mathds{1}(X_{ij}\leq x_j')\mathds{1}(Z=z)\mathds{1}(Z=z')}{P(Z=z)P(Z=z')}\right]\\
&\hspace{1cm} -E\left[\frac{\prod_{j=1}^k\mathds{1}(X_{ij}\leq x_j)F_{X|Z=z'}(x')\mathds{1}(Z=z)\mathds{1}(Z=z')}{P(Z=z)P(Z=z')}\right]\\
&\hspace{1.5cm} -E\left[\frac{F_{X|Z=z}(x)\prod_{j=1}^k\mathds{1}(X_{ij}\leq x'_j)\mathds{1}(Z=z)\mathds{1}(Z=z')}{P(Z=z)P(Z=z')}\right]\\
&\hspace{2cm} + \frac{F_{X|Z=z}(x)F_{X|Z=z}(x')E[\mathds{1}(Z_i=z')]}{P(Z=z)P(Z=z')}\\
=&E\left[\frac{\prod_{j=1}^k\mathds{1}(X_{ij}\leq \min\{x_j,x_j'\})\mathds{1}(Z_i=z)}{P(Z=z)^2}\right]\mathds{1}(z=z')\\
&\hspace{1cm} -E\left[\frac{\prod_{j=1}^k\mathds{1}(X_{ij}\leq x_j)\mathds{1}(Z_i=z)}{P(Z=z)}\right]\frac{F_{X|Z=z}(x')}{P(Z=z)}\mathds{1}(z=z')\\
&\hspace{1.5cm} -E\left[\frac{\prod_{j=1}^k\mathds{1}(X_{ij}\leq x'_j)\mathds{1}(Z_i=z)}{P(Z=z)}\right]\frac{F_{X|Z=z}(x')}{P(Z=z)}\mathds{1}(z=z')\\
&\hspace{2cm} + \frac{F_{X|Z=z}(x)F_{X|Z=z}(x')}{P(Z=z)}\mathds{1}(z=z')\\
=&\frac{F_{X|Z=z}(\min\{x,x'\}) - F_{X|Z=z}(x)F_{X|Z=z}(x')}{P(Z=z)}\mathds{1}(z=z').
\end{align*}\mbox{}\\

\noindent{Step 3:} In order to derive the asymptotic process of $\sqrt{n}(\hat{D}_n(x)-D(x))$, we want to apply the continuous mapping theorem. This is easy in our case, since $Z$ can only take two values, so that the evaluation functional $F_{X|Z=z}(x)\mapsto F_{X|Z=t}(x)$ is continuous for $t\in\{z_0,z_1\}$, because it is a projection in Euclidean space. Moreover, the difference functional $(f,g)\mapsto f-g$ is continuous in the standard product topology. Therefore, the continuous mapping theorem \citep[Theorem 1.3.6]{wellner2013weak} implies that
\[\sqrt{n}(\hat{D}_n(x)-D(x)) = \sqrt{n}(F_{X|Z=z_0}(x)-F_{X|Z=z_1}(x))\rightsquigarrow\mathbb{G}(x)\coloneqq\mathcal{G}(x,z_0)-\mathcal{G}(x,z_1),\]
where $\mathbb{G}(x)$ is again a mean-zero Gaussian process, as the two Gaussian processes $\mathcal{G}(x,z_0)$ and $\mathcal{G}(x,z_1)$ are uncorrelated by what we have just derived above. Its covariance kernel takes the form
\begin{multline*}
CoV(\mathbb{G})(x,x') = \frac{F_{X|Z=z}(\min\{x,x'\}) - F_{X|Z=z}(x)F_{X|Z=z}(x')}{P(Z=z_0)}\\+\frac{F_{X|Z=z'}(\min\{x,x'\}) - F_{X|Z=z'}(x)F_{X|Z=z'}(x')}{P(Z=z_1)},
\end{multline*} which is what we wanted to show.

\end{proof}

\newpage

\setcounter{section}{0}
\renewcommand{\thesection}{\arabic{section}}
\setcounter{equation}{0}
\renewcommand{\theequation}{\arabic{equation}}
\setcounter{theorem}{0}
\setcounter{lemma}{0}
\setcounter{assumption}{0}
\setcounter{corollary}{0}
\setcounter{proposition}{0}
\setcounter{definition}{0}
\setcounter{remark}{0}
\setcounter{figure}{0}
\renewcommand{\thefigure}{\arabic{figure}}

\begin{center}
{\LARGE\bfseries Corrigendum and Addendum to ``A condition for the identification of multivariate models with binary instruments''}\\[1.5em]
{\large Florian Gunsilius}\\[0.5em]
\today
\end{center}

\begin{abstract}
This note corrects the proof of Lemma 1 in \citet{gunsilius2023condition}.
The proof given there incorrectly identifies preservation of distributional
level sets with preservation of the underlying probability measure via Brenier
maps. We replace that argument by one based on inverse Brenier maps, which play
the role of multivariate ranks. The corrected argument applies to a
different but significantly more flexible class of distributions than the quasi-concave class
considered in the original paper. In particular, it allows for smooth
non-quasi-concave and multimodal densities on compact supports, provided the
associated rank fixed set satisfies a nondegeneracy condition. Moreover, it is generically satisfied for smooth parmetric classes of distributions.
\end{abstract}

\section{Summary and notation}

The purpose of this note is to correct the proof of Lemma 1 in
\citet{gunsilius2023condition} and to state the resulting corrected version of
the identification argument. The error is confined to this part of the
identification argument of the paper. All other results are unaffected.
However, the condition for identification changes from an intersection of
quasi-concave CDFs to a different class of possible data-generating processes
based on regularity of the associated densities and a nondegeneracy condition
on the inverse Brenier map.

The proof of Lemma 1 in the original paper used a level-preserving projection
map and then invoked Brenier--Rockafellar uniqueness. The gap is that
preserving distribution-function values,
\[
    F(Tx)=G(x),
\]
is not the same as preserving the underlying measure,
\[
    T_{\#}G=F.
\]
Thus the constructed projection map need not be the Brenier map in general.

Before stating the correction, we state the identification model, which is the
nonseparable triangular model
\begin{equation}\label{maineqcorr}
\begin{aligned}
Y &= m(X,\varepsilon),\\
X &= h(Z,U),
\qquad
Z\independent (\varepsilon,U),
\end{aligned}
\end{equation}
where \(Z\in\{z,z'\}\), \(X,U\in\mathbb R^k\), and \(Y\in\mathbb R^d\). As in
\citet{gunsilius2023condition}, the first-stage normalization is
\[
    h(z',u)=u
    \qquad\text{for all }u\in\mathcal U .
\]

The correction of Lemma 1 is to replace level-set intersections by
intersections of inverse Brenier maps. Write
\[
    h_z(u):=h(z,u),
    \qquad
    h_{z'}(u):=h(z',u),
\]
and let
\[
    P_z:=P_{X\mid Z=z},
    \qquad
    P_{z'}:=P_{X\mid Z=z'}.
\]

Under the absolute-continuity assumptions maintained in
\citet{gunsilius2023condition}, Brenier's theorem
\citep[Theorem 2.12]{villani2021topics} implies that \(h_z\) and \(h_{z'}\)
are unique \(P_U\)-almost surely as Brenier maps. The reverse Brenier maps from
\(P_z\) and \(P_{z'}\) back to \(P_U\) are also well-defined almost surely and
coincide with the almost-sure inverses of \(h_z\) and \(h_{z'}\). We denote
these inverse Brenier maps, which play the role of multivariate rank maps, by
\[
    R_z:=h_z^{-1},
    \qquad
    R_{z'}:=h_{z'}^{-1}.
\]
The corrected intersection set is
\[
    \mathcal I_R
    :=
    \{x\in\mathcal X:R_z(x)=R_{z'}(x)\}.
\]
Thus the relevant intersection is the intersection of the inverse Brenier
rank maps, not the intersection of multivariate distribution-function level
sets.

By the normalization \(h_{z'}=\mathrm{id}\) in Assumption 2 of
\citet{gunsilius2023condition}, we have \(P_U=P_{z'}\) and
\(R_{z'}=\mathrm{id}\). If
\[
    T:=h_z,
\]
then \(T\) is the Brenier map from \(P_{z'}\) to \(P_z\), and
\[
    R_z=T^{-1}.
\]
Hence
\[
    \mathcal I_R
    =
    \{x\in\mathcal X:R_z(x)=x\}
    =
    \operatorname{Fix}(R_z)
    =
    \operatorname{Fix}(T^{-1}).
\]
The corrected intersection set is therefore the fixed set of the inverse
Brenier rank map.

The associated iteration is a labeled iteration on
\(\mathcal X\times\{z,z'\}\), as in the original paper, but its induced
recursion on the \(X\)-coordinate is simply
\[
    x_{n+1}=R_z(x_n)=T^{-1}(x_n).
\]
Starting from a labeled point \((x_0,z')\), the labeled sequence is
\[
    (x_0,z')
    \longmapsto
    (x_0,z)
    \longmapsto
    (R_z(x_0),z')
    \longmapsto
    (R_z(x_0),z)
    \longmapsto
    (R_z^2(x_0),z')
    \longmapsto
    \cdots .
\]
The arrows alternatingly perform a label switch and an inverse-Brenier rank
transition back to the normalized regime. The fixed points of the induced
recursion are exactly the points in \(\mathcal I_R\).

\section{Corrected condition on the data-generating process}

We now replace Assumption 4 in \citet{gunsilius2023condition} by a condition
stated in terms of the inverse Brenier rank map \(R_z\). Assumptions 1--3 of
\citet{gunsilius2023condition} are maintained.

We write \(C^{0,\alpha}(\overline{\mathcal X})\), for
\(\alpha\in(0,1)\), for the H\"older class on the closure of the support
\(\overline{\mathcal X}\). We write \(\partial\mathcal X\) for the boundary
of \(\mathcal X\). We say that \(\mathcal X\) is uniformly convex if
\(\partial\mathcal X\) is \(C^2\) and all principal curvatures are bounded away
from zero \citep{caffarelli1996boundary}. Parts (i) and (ii) of the following
assumption are standard regularity conditions in optimal transport theory based
on Caffarelli's regularity theory \citep[see][Chapter 4]{villani2021topics}.
Part (iii) is the analogue of the transversal intersection assumption in
\citet{gunsilius2023condition}.

\begin{assumptions}[Corrected condition on the data-generating process]{4}{$'$}
\label{supportassprime}
Let \(P_z=P_{X\mid Z=z}\) and \(P_{z'}=P_{X\mid Z=z'}\). The following hold.

\begin{enumerate}[label=(\roman*)]
    \item \(P_z\) and \(P_{z'}\) have common compact support
    \(\mathcal X\subset\mathbb R^k\), where \(\mathcal X\) is uniformly convex.

    \item \(P_z\) and \(P_{z'}\) are absolutely continuous with respect to
    Lebesgue measure, with densities \(f_z\) and \(f_{z'}\). For some
    \(\alpha\in(0,1)\),
    \[
        f_z,f_{z'}\in C^{0,\alpha}(\overline{\mathcal X}),
    \]
    and there exist constants \(0<\lambda<\Lambda<\infty\) such that
    \[
        \lambda
        \leq
        f_z(x),f_{z'}(x)
        \leq
        \Lambda
        \qquad\text{for all }x\in\overline{\mathcal X}.
    \]

    \item Let
    \[
        R_z:\mathcal X\to\mathcal X
    \]
    denote the inverse Brenier map transporting \(P_z\) to \(P_{z'}\), and let
    \[
        \mathcal I_R
        :=
        \{x\in\mathcal X:R_z(x)=x\}.
    \]
    There exists an integer \(r\in\{1,\ldots,k\}\) such that the map
    \[
        F_R(x):=R_z(x)-x
    \]
    has constant rank \(r\) in a neighborhood of \(\mathcal I_R\). This
    constant-rank condition is understood locally, after fixing a \(C^1\)
    extension of \(R_z\) to an open neighborhood of \(\overline{\mathcal X}\).
\end{enumerate}
\end{assumptions}

\begin{lemma}[Regularity of the inverse Brenier rank map]
\label{lem:rankregularity}
Under Assumption \ref{supportassprime}, the inverse Brenier rank map
\[
    R_z:\mathcal X\to\mathcal X
\]
transporting \(P_z\) to \(P_{z'}\) admits a representative in
\(C^{1,\alpha}(\overline{\mathcal X})\).
\end{lemma}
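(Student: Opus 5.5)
The plan is to reduce the statement to Caffarelli's global regularity theory for the Monge--Amp\`ere equation with a second boundary value condition. By Brenier's theorem \citep[Theorem 2.12]{villani2021topics}, the inverse Brenier map is
\[
R_z=\nabla\psi
\]
for a convex potential $\psi$ on $\mathcal X$ with $(\nabla\psi)_\#P_z=P_{z'}$. Equivalently, $R_z$ is the $P_z$-a.s.\ inverse of $T=h_z=\nabla\varphi$, where $\psi=\varphi^*$ is the Legendre transform. I will work with $\psi$ directly, since it is the potential of a transport between two measures whose source and target supports both equal the same set $\mathcal X$.

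\emph{Step 1 (Monge--Amp\`ere formulation).} Assumption \ref{supportassprime}(ii) gives densities bounded above and below on $\overline{\mathcal X}$, and (i) gives a convex compact support. Under these conditions $\psi$ is a Brenier (Alexandrov) weak solution of
\[
\det D^2\psi(x)=\frac{f_z(x)}{f_{z'}(\nabla\psi(x))}\quad\text{on }\mathcal X,\qquad \nabla\psi(\mathcal X)=\mathcal X .
\]
The convexity of the target lets us pass from the Brenier notion to the Alexandrov notion (Caffarelli 1992). The right-hand side is then pinched between $\lambda/\Lambda$ and $\Lambda/\lambda$.

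\emph{Step 2 (interior regularity).} Caffarelli's interior theory makes $\psi$ strictly convex and $C^{1,\beta}$ in the interior. Since the right-hand side is a composition of H\"older functions once $\nabla\psi$ is H\"older, a bootstrap gives $\psi\in C^{2,\alpha}_{\mathrm{loc}}(\mathcal X)$.

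\emph{Step 3 (up to the boundary).} This is the main obstacle. I will invoke \citet{caffarelli1996boundary}. When source and target domains are uniformly convex with $C^2$ boundary and the densities are $C^{0,\alpha}$ and bounded away from zero and infinity on the closures, the potential satisfies $\psi\in C^{2,\alpha}(\overline{\mathcal X})$. It follows that $R_z=\nabla\psi\in C^{1,\alpha}(\overline{\mathcal X})$. Since the paper's hypotheses are stated exactly in this form (uniform convexity, H\"older densities with two-sided bounds), I expect this to be a direct citation. The care needed is in two places. First, Caffarelli's theorem is proved for a general source--target pair, and here the pair is $(\mathcal X,\mathcal X)$, which is fine. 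Second, the statement must be applied to the inverse direction, i.e.\ to $\psi=\varphi^*$ rather than $\varphi$. This works because the hypotheses are symmetric in $P_z$ and $P_{z'}$. If needed, one can also deduce the result from the regularity of $\varphi$: $D^2\varphi$ is uniformly positive definite on $\overline{\mathcal X}$ by the density bounds and the determinant equation, so the inverse function theorem transfers $C^{1,\alpha}$ regularity from $\nabla\varphi$ to $(\nabla\varphi)^{-1}$.

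\emph{Step 4 (representative).} Brenier maps are unique only $P_z$-a.s., so the conclusion concerns a representative. The regular $\nabla\psi$ from Step 3 agrees $P_z$-a.s.\ with any version of $R_z$, and $P_z$ is equivalent to Lebesgue measure on $\mathcal X$. Hence $\nabla\psi$ is the claimed representative in $C^{1,\alpha}(\overline{\mathcal X})$. It also maps $\overline{\mathcal X}$ onto itself, which justifies the $C^1$ extension used in Assumption \ref{supportassprime}(iii) via a Whitney extension.
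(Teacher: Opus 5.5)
Your proposal is correct and follows essentially the same route as the paper. Both write $R_z=\nabla\psi$, treat $\psi$ as a solution of the Monge--Amp\`ere equation $\det D^2\psi=f_z/f_{z'}(\nabla\psi)$ with second boundary condition $\nabla\psi(\mathcal X)=\mathcal X$, and invoke Caffarelli's boundary regularity for uniformly convex $C^2$ domains with H\"older densities bounded away from zero and infinity to get $\psi\in C^{2,\alpha}(\overline{\mathcal X})$. Your interior bootstrap, the inverse-function-theorem alternative and the representative discussion are harmless extras the paper omits (it instead cites Villani's Theorem 4.10 for the Monge--Amp\`ere formulation), and the global boundary theorem already covers the interior.
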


\begin{proof}
Let \(R_z=\nabla\psi\) be the Brenier map transporting \(P_z\) to
\(P_{z'}\), i.e.,
\[
    \nabla\psi_{\#}P_z=P_{z'}.
\]
By Villani's theorem on the Monge--Amp\`ere equation for Brenier potentials
\citep[Theorem 4.10]{villani2021topics}, the Monge--Amp\`ere measure
associated with \(\psi\) has no singular part on \(\mathcal X\) and is
represented by
\[
    \det D^2\psi(x)
    =
    \frac{f_z(x)}{f_{z'}(\nabla\psi(x))}
\]
in the generalized Monge--Amp\`ere sense. Moreover, the transport condition
gives the second boundary condition
\[
    \nabla\psi(\mathcal X)=\mathcal X.
\]

By Assumption \ref{supportassprime}, the source and target supports coincide
with the same uniformly convex \(C^2\) domain \(\mathcal X\), and the densities
\(f_z\) and \(f_{z'}\) are \(C^{0,\alpha}\), positive, and bounded away from
zero and infinity on \(\overline{\mathcal X}\). Therefore Caffarelli's
boundary regularity theorem for optimal transportation
\citep[Theorem 4.14(ii)]{caffarelli1996boundary, villani2021topics} implies
\[
    \psi\in C^{2,\alpha}(\overline{\mathcal X}).
\]
Consequently the displayed Monge--Amp\`ere equation holds pointwise in the
classical sense, and
\[
    R_z=\nabla\psi\in C^{1,\alpha}(\overline{\mathcal X}).
\]
\end{proof}

\begin{lemma}[Inverse-Brenier fixed set]
\label{lem:rankfixedset}
Under Assumption \ref{supportassprime}, the corrected intersection set
\[
    \mathcal I_R
    =
    \{x\in\mathcal X:R_z(x)=x\}
\]
is nonempty and is contained in a \(C^1\) submanifold of dimension
\(k-r<k\), possibly with boundary. In particular, \(\mathcal I_R\) has
Lebesgue measure zero.
\end{lemma}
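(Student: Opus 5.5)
The plan is to prove the three claims of Lemma \ref{lem:rankfixedset} separately: nonemptiness, the submanifold structure, and measure zero. Each step relies on the regularity given by Lemma \ref{lem:rankregularity}.

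\emph{Nonemptiness.} By Lemma \ref{lem:rankregularity}, $R_z$ has a representative in $C^{1,\alpha}(\overline{\mathcal X})$, and in particular it is continuous on $\overline{\mathcal X}$. The second boundary condition $\nabla\psi(\mathcal X)=\mathcal X$ from that proof, together with continuity, gives $R_z(\overline{\mathcal X})\subseteq\overline{\mathcal X}$. Assumption \ref{supportassprime}(i) says $\overline{\mathcal X}$ is a nonempty compact convex subset of $\mathbb R^k$. Brouwer's fixed point theorem therefore yields some $x^*\in\overline{\mathcal X}$ with $R_z(x^*)=x^*$, so $\mathcal I_R\neq\emptyset$. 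I will work throughout with $\overline{\mathcal X}$, which is harmless: $\partial\mathcal X$ is a $C^2$ hypersurface and hence Lebesgue-null. I will also note explicitly that the statement ``possibly with boundary'' covers fixed points lying on $\partial\mathcal X$.

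\emph{Submanifold structure.} Fix the $C^1$ extension of $R_z$ to an open neighborhood $V\supset\overline{\mathcal X}$ named in Assumption \ref{supportassprime}(iii). Then $F_R(x)=R_z(x)-x$ is $C^1$ on $V$. Also by (iii), there is an open set $W$ with $\mathcal I_R\subset W\subset V$ on which $DF_R$ has constant rank $r\ge1$. I will apply the constant rank level set theorem to $F_R|_W$. Around each $x_0\in F_R^{-1}(0)\cap W$ the rank theorem gives $C^1$ charts in which $F_R$ becomes $(y_1,\dots,y_k)\mapsto(y_1,\dots,y_r,0,\dots,0)$. Hence $F_R^{-1}(0)\cap W$ is locally the coordinate slice $\{y_1=\dots=y_r=0\}$, and so $M:=F_R^{-1}(0)\cap W$ is an embedded $C^1$ submanifold of $W$ of dimension $k-r<k$. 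Since $\mathcal I_R=M\cap\mathcal X$, this gives the containment claimed. The part near $\partial\mathcal X$ is handled by the extension, which is why (iii) was phrased in terms of it.

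\emph{Measure zero, and the main obstacle.} $M$ can be covered by countably many charts, each the $C^1$ image of an open subset of $\mathbb R^{k-r}\times\{0\}$. A $C^1$ (hence locally Lipschitz) image of a Lebesgue-null set in $\mathbb R^k$ is null, so $\lambda_k(M)=0$, and therefore $\lambda_k(\mathcal I_R)=0$. I expect the main obstacle to be regularity up to the boundary. The constant rank theorem requires $F_R$ to be genuinely $C^1$ on an open set containing the fixed points, including any on $\partial\mathcal X$. This is exactly where the global $C^{1,\alpha}(\overline{\mathcal X})$ conclusion of Lemma \ref{lem:rankregularity}, obtained from Caffarelli's boundary theory, is needed together with the chosen extension. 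I would check that the rank condition for the extension is used only on $W$, so that the choice of extension does not affect the conclusion on $\mathcal X$ itself.
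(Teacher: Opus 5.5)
Your proposal is correct and follows the paper's proof essentially step for step. Brouwer's theorem applied to the continuous (Caffarelli-regular) representative of $R_z$ on the compact convex support gives nonemptiness, and the $C^1$ constant-rank theorem applied to $F_R=R_z-\mathrm{id}$ on the neighborhood from Assumption~$4'$(iii) gives a $(k-r)$-dimensional $C^1$ submanifold containing $\mathcal I_R$, hence a Lebesgue-null set. Your detour through $\overline{\mathcal X}$ is unnecessary: compactness in Assumption~$4'$(i) already gives $\overline{\mathcal X}=\mathcal X$, and this is also what guarantees that the Brouwer fixed point lies in $\mathcal I_R$.
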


\begin{proof}
By Lemma \ref{lem:rankregularity}, \(R_z\) admits a continuous representative
mapping the compact convex set \(\mathcal X\) into itself. Brouwer's fixed
point theorem \citep[e.g.][Corollary 6.6]{border1985fixed} therefore implies
that \(R_z\) has at least one fixed point. Hence
\(\mathcal I_R\neq\emptyset\).

Moreover,
\[
    \mathcal I_R
    =
    F_R^{-1}(\{0\}),
    \qquad
    F_R(x):=R_z(x)-x.
\]
By Assumption \ref{supportassprime}, \(F_R\) has constant rank \(r\) in a
neighborhood of \(\mathcal I_R\). The \(C^1\) constant-rank theorem
\citep[e.g.][Theorem 9.32]{rudin1976principles} implies, locally around each
point of \(\mathcal I_R\), that the zero set is contained in a \(C^1\)
submanifold of dimension \(k-r\), possibly with boundary if it meets
\(\partial\mathcal X\). Since \(r\geq 1\), this dimension is strictly smaller
than \(k\). Hence \(\mathcal I_R\) has Lebesgue measure zero.
\end{proof}

The following lemma shows that the inverse-Brenier rank-map dynamics accumulate
at the lower-dimensional fixed set, analogously to the role played by the
fixed-set dynamics in the original proof.

\begin{lemma}[Rank-map orbit accumulation]
\label{lem:rankorbitaccumulation}
Suppose Assumption \ref{supportassprime} holds. Let \(R_z=\nabla\psi\) be the
inverse Brenier rank map transporting \(P_z\) to \(P_{z'}\). For any
\(x_0\in\mathcal X\), define the sequence
\[
    x_{n+1}:=R_z(x_n),
    \qquad n\geq 0.
\]
Then every accumulation point of \(\{x_n\}_{n\geq 0}\) belongs to
\[
    \mathcal I_R=\{x\in\mathcal X:R_z(x)=x\}.
\]
In particular, every orbit of \(R_z\) has a subsequence converging to a point
in \(\mathcal I_R\).
\end{lemma}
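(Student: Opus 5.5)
The plan is a Lyapunov (energy) argument based on convexity of the Brenier potential \(\psi\), followed by continuity of \(R_z\) and compactness of \(\mathcal X\). By Lemma \ref{lem:rankregularity} I work with the representative \(\psi\in C^{2,\alpha}(\overline{\mathcal X})\). Then \(R_z=\nabla\psi\) is continuous on \(\overline{\mathcal X}\) and maps \(\overline{\mathcal X}\) into itself, since \(\nabla\psi(\mathcal X)=\mathcal X\) and we can pass to closures. So the orbit \(x_{n+1}=R_z(x_n)\) is well defined for every starting point, not just almost every one, and stays in the compact set \(\overline{\mathcal X}\). Hence accumulation points exist by Bolzano--Weierstrass. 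This settles the ``in particular'' clause once the main claim is proved.

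The key step is to introduce
\[
W(x):=\psi(x)-\tfrac12\|x\|_2^2 ,
\]
which is continuous, and therefore bounded, on \(\overline{\mathcal X}\). I will show that \(W\) increases along orbits. Convexity of \(\psi\) gives the gradient inequality at \(x_n\):
\[
\psi(x_{n+1})\ge \psi(x_n)+\langle \nabla\psi(x_n),x_{n+1}-x_n\rangle=\psi(x_n)+\langle x_{n+1},x_{n+1}-x_n\rangle .
\]
The elementary identity
\[
\tfrac12\|x_{n+1}\|^2-\tfrac12\|x_n\|^2=\langle x_{n+1},x_{n+1}-x_n\rangle-\tfrac12\|x_{n+1}-x_n\|^2
\]
then yields
\[
W(x_{n+1})-W(x_n)\ \ge\ \tfrac12\|x_{n+1}-x_n\|_2^2 .
\]
Telescoping and using boundedness of \(W\) gives \(\sum_n\|x_{n+1}-x_n\|^2\le 2(\sup W-\inf W)<\infty\). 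Consequently \(\|R_z(x_n)-x_n\|\to0\).

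To conclude, let \(x^*\) be an accumulation point, with \(x_{n_j}\to x^*\). Continuity of \(F_R(x)=R_z(x)-x\) gives \(F_R(x^*)=\lim_j F_R(x_{n_j})=0\), so \(x^*\in\mathcal I_R\). This is consistent with Lemma \ref{lem:rankfixedset}, which already guarantees \(\mathcal I_R\neq\emptyset\).

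The main obstacle is not the estimate itself but justifying it at every point. The gradient inequality must hold for the specific continuous representative of \(\nabla\psi\) used to define the orbit, and the orbit must remain in \(\overline{\mathcal X}\) so that \(W\) stays bounded. Both are supplied by the \(C^{2,\alpha}(\overline{\mathcal X})\) regularity from Lemma \ref{lem:rankregularity}. If one worked only with an almost-everywhere defined Brenier map, a single orbit could land in the exceptional null set, so I will state explicitly that the lemma refers to this representative.
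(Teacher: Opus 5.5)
Your proposal is correct and is essentially the paper's proof. Both use the Lyapunov function $\psi(x)-\tfrac12\|x\|^2$, the convexity-based increment bound $\tfrac12\|x_{n+1}-x_n\|^2$, boundedness on the compact support to force $\|x_{n+1}-x_n\|\to 0$, and continuity of $R_z$ at an accumulation point, while your explicit use of the continuous representative from Lemma~\ref{lem:rankregularity} (which maps $\overline{\mathcal X}$ into itself) states precisely a point the paper leaves implicit.
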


\begin{proof}
Define
\[
    L(x):=\psi(x)-\frac{1}{2}\|x\|^2.
\]
Since \(R_z=\nabla\psi\), convexity of \(\psi\) gives
\[
    \psi(R_z(x))-\psi(x)
    \geq
    \langle \nabla\psi(x),R_z(x)-x\rangle
    =
    \langle R_z(x),R_z(x)-x\rangle .
\]
Therefore
\[
\begin{aligned}
    L(R_z(x))-L(x)
    &=
    \psi(R_z(x))-\psi(x)
    -\frac{1}{2}\|R_z(x)\|^2+\frac{1}{2}\|x\|^2  \\
    &\geq
    \frac{1}{2}\|R_z(x)-x\|^2 .
\end{aligned}
\]
Hence \(L(x_n)\) is nondecreasing along the orbit, and
\[
    L(x_{n+1})-L(x_n)
    \geq
    \frac{1}{2}\|x_{n+1}-x_n\|^2 .
\]
Since \(\mathcal X\) is compact and \(L\) is continuous on \(\mathcal X\),
the sequence \(L(x_n)\) is bounded above. Thus \(L(x_n)\) converges, and
\[
    \|x_{n+1}-x_n\|\to 0.
\]
Since \(\mathcal X\) is compact, \(\{x_n\}\) has an accumulation point. Let
\(x_{n_j}\to x^\ast\). Since
\[
    \|x_{n_j+1}-x_{n_j}\|\to 0,
\]
we also have \(x_{n_j+1}\to x^\ast\). By continuity of \(R_z\),
\[
    R_z(x^\ast)
    =
    \lim_{j\to\infty} R_z(x_{n_j})
    =
    \lim_{j\to\infty} x_{n_j+1}
    =
    x^\ast.
\]
Thus \(x^\ast\in\mathcal I_R\).
\end{proof}

\section{Replacement of Lemma 1 in \citet{gunsilius2023condition}}

The preceding discussion identified the corrected fixed set and the
corresponding rank-map iteration. We now state the probabilistic step that
makes this iteration useful for identification. The key point is that the
transition
\[
    x \longmapsto R_z(x)
\]
changes the value of \(X\) and the instrument value, but keeps the first-stage
rank \(U\) fixed. Since the instrument is independent of \((\varepsilon,U)\),
conditioning on the same value of \(U\) gives the same conditional distribution
of \(\varepsilon\) across instrument regimes. This is the step that replaces
the level-set argument used in Lemma 1 of \citet{gunsilius2023condition}.

\begin{lemma}[Inverse-Brenier rank transition]
\label{lem:rankpreserving}
Maintain Assumptions 1--3 of \citet{gunsilius2023condition} and Assumption
\ref{supportassprime}. Let \(R_z:=h_z^{-1}\) be the inverse Brenier rank map
transporting \(P_z\) to \(P_{z'}\). Then, for \(P_z\)-almost every
\(x\in\mathcal X\),
\[
    P_{\varepsilon\mid X=x,Z=z}
    =
    P_{\varepsilon\mid X=R_z(x),Z=z'}.
\]
\end{lemma}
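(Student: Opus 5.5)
The plan is to work at the level of regular conditional distributions (disintegrations) rather than with the informal "conditioning on $U=h^{-1}(x,z)$" of the original text. That way the statement reduces to a change-of-variables identity plus a.e.-uniqueness of disintegrations.

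Since $\varepsilon$ and $U$ live in Euclidean spaces, there is a Markov kernel $\kappa(u,\cdot):=P_{\varepsilon\mid U=u}$ with $P(\varepsilon\in E,\,U\in B)=\int_B\kappa(u,E)\,dP_U(u)$ for all Borel $E,B$. Because $Z\independent(\varepsilon,U)$, the joint law of $(\varepsilon,U)$ given $Z=t$ equals the unconditional one for both $t\in\{z,z'\}$. Hence $\kappa$ is simultaneously a version of $P_{\varepsilon\mid U=u,Z=z}$ and of $P_{\varepsilon\mid U=u,Z=z'}$. This is the ``vertical'' step, and it holds with no exceptional set beyond the one inherent in choosing $\kappa$.

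\textbf{Step 1 (regime $z'$).} By the normalization $h_{z'}=\mathrm{id}$, on $\{Z=z'\}$ we have $X=U$, so $P_{z'}=P_U$. Therefore $\kappa(x,\cdot)$ is a version of $P_{\varepsilon\mid X=x,Z=z'}$, that is, $P(\varepsilon\in E,X\in A\mid Z=z')=\int_A\kappa(x,E)\,dP_{z'}(x)$.

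\textbf{Step 2 (regime $z$).} This is the ``horizontal'' step. On $\{Z=z\}$ we have $X=h_z(U)$. By Brenier's theorem and the absolute continuity in Assumption~\ref{supportassprime}(ii), $R_z\circ h_z=\mathrm{id}$ holds $P_U$-a.s., and $(h_z)_\#P_U=P_z$. For Borel $A,E$, independence and these two facts give
\[
P(\varepsilon\in E,\,X\in A\mid Z=z)=\int \mathds{1}_A(h_z(u))\,\kappa(u,E)\,dP_U(u)
=\int \mathds{1}_A(h_z(u))\,\kappa\bigl(R_z(h_z(u)),E\bigr)\,dP_U(u).
\]
The change of variables $x=h_z(u)$ turns the right-hand side into $\int_A\kappa(R_z(x),E)\,dP_z(x)$. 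Measurability of $x\mapsto\kappa(R_z(x),E)$ is clear, since $R_z$ is continuous by Lemma~\ref{lem:rankregularity} (Borel measurability would already suffice). Hence $x\mapsto\kappa(R_z(x),\cdot)$ is a version of $P_{\varepsilon\mid X=x,Z=z}$.

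\textbf{Step 3 (combining and null sets).} By a.e.-uniqueness of disintegrations, $P_{\varepsilon\mid X=x,Z=z}=\kappa(R_z(x),\cdot)$ for $P_z$-a.e.\ $x$, and $P_{\varepsilon\mid X=y,Z=z'}=\kappa(y,\cdot)$ for $y$ outside a $P_{z'}$-null set $N$. Evaluating the latter at $y=R_z(x)$ requires $R_z(x)\notin N$ for $P_z$-a.e.\ $x$. This holds because $(R_z)_\#P_z=P_{z'}$, so $P_z(R_z^{-1}(N))=P_{z'}(N)=0$. Combining the two identities yields the claim for $P_z$-a.e.\ $x$.

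The main obstacle is this null-set bookkeeping. Conditional laws are defined only almost everywhere, so evaluating one version along the map $R_z$ is meaningful only because $R_z$ is measure-preserving from $P_z$ to $P_{z'}$. One must also use that $R_z$ is a genuine a.s.\ inverse of $h_z$, not merely a level-preserving map. That is exactly the point where the original Lemma~1 argument failed, and where Brenier's theorem is used in Step 2.
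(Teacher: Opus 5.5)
Your proposal is correct and follows the same three-step chain as the paper's proof: conditioning on $\{X=x,Z=z\}$ is replaced by conditioning on $\{U=R_z(x),Z=z\}$, the exclusion restriction swaps $z$ for $z'$, and the normalization $h_{z'}=\mathrm{id}$ identifies $U$ with $X$ under $z'$; you simply make these steps rigorous through the single kernel $\kappa(u,\cdot)=P_{\varepsilon\mid U=u}$. Your Step~3, which uses $(R_z)_\#P_z=P_{z'}$ to show that evaluating a $P_{z'}$-a.e.\ defined conditional law at $R_z(x)$ is legitimate for $P_z$-a.e.\ $x$, supplies a justification that the paper leaves implicit.
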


\begin{proof}
Since \(X=h_z(U)\) on the event \(\{Z=z\}\), and \(R_z=h_z^{-1}\) almost
surely, conditioning on \(X=x\) and \(Z=z\) is equivalent to conditioning on
\(U=R_z(x)\) and \(Z=z\). Thus
\[
    P_{\varepsilon\mid X=x,Z=z}
    =
    P_{\varepsilon\mid U=R_z(x),Z=z}.
\]
By the exclusion restriction \(Z\independent(\varepsilon,U)\),
\[
    P_{\varepsilon\mid U=R_z(x),Z=z}
    =
    P_{\varepsilon\mid U=R_z(x),Z=z'}.
\]
Finally, under the normalization \(h_{z'}=\mathrm{id}\), the event
\(\{U=R_z(x),Z=z'\}\) corresponds to \(\{X=R_z(x),Z=z'\}\). Hence
\[
    P_{\varepsilon\mid U=R_z(x),Z=z'}
    =
    P_{\varepsilon\mid X=R_z(x),Z=z'}.
\]
Combining the three equalities gives the claim.
\end{proof}

\section{Corrected identification theorem in \citet{gunsilius2023condition}}

We use the fixed set only as a lower-dimensional anchor for the rank-map
dynamics. In particular, we do not distinguish points within a fixed-set piece.
Formally, let \(\mathfrak C\) be a collection of connected pieces of
\(\mathcal I_R\), and regard each \(C\in\mathfrak C\) as a single element of
the quotient of \(\mathcal I_R\) obtained by collapsing points in the same
piece. Since \(\mathcal I_R\) is \(P_X\)-null, this quotienting has no effect
on the almost-everywhere identification statement; it only records which
fixed-set piece anchors a given basin.

For \(C\in\mathfrak C\), define its rank basin by
\[
    \mathcal B(C)
    :=
    \left\{
    x\in \mathcal X\setminus\mathcal I_R:
    \text{some accumulation point of } \{R_z^n(x)\}_{n\geq 0}
    \text{ belongs to } C
    \right\}.
\]
Thus \(\mathcal B(C)\) depends only on the fixed-set piece \(C\), not on the
particular point of \(C\) at which a subsequence of the orbit accumulates.

We say that a fixed-set piece \(C\in\mathfrak C\) is normalized if the
second-stage normalization is imposed at the level of this quotient piece. That
is, \(C\) is treated as a single anchor for the basin \(\mathcal B(C)\). In
the finite-piece case below, this amounts to imposing one normalization for
each fixed-set piece \(C_j\). Since the pieces \(C_j\subseteq\mathcal I_R\) are
\(P_X\)-null, no pointwise identification claim is made on the pieces
themselves.

\begin{theorem}[Identification on a rank basin]
\label{thm:basin_identification}
Let Assumptions 1--3 of \citet{gunsilius2023condition} hold for model
\eqref{maineqcorr}, and replace Assumption 4 of \citet{gunsilius2023condition}
by Assumption~\(4'\). Let \(C\in\mathfrak C\) be a normalized fixed-set piece.
Then the mechanism \(m\) is identified for \(P_X\)-almost every
\(x\in\mathcal B(C)\) and \(P_\varepsilon\)-almost every \(\varepsilon\).
\end{theorem}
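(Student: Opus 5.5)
The proof will follow the three-part structure of the original proof of Theorem 1, with the level-set dynamics (old Lemma 2) replaced by Lemmas \ref{lem:rankpreserving} and \ref{lem:rankorbitaccumulation}.

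\emph{Setup.} Part 1 of the original proof, which characterizes the identified set, carries over unchanged. For two observationally equivalent pairs $(m,\varepsilon)$ and $(m^*,\varepsilon^*)$, I define as before the correspondence
\[
q(x,\cdot):=m^{-1}\bigl(x,m^*(x,\cdot)\bigr).
\]
As in Part 2, it suffices to show that $q(x,\cdot)$ does not depend on $x$ for $P_X$-a.e.\ $x\in\mathcal B(C)$. Once this holds, Assumption 3(1) (uniqueness of the exogenous transport map), together with the normalization imposed on the anchor piece $C$, forces $e\in q(x,e)$ and hence $m=m^*$ a.e. The set-valued continuity in probability \eqref{wtsq} of $q$ in $x$, derived in Part 3 from Assumption 3(2), is likewise reused verbatim.

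\emph{Invariance along orbits.} Fix $x_0\in\mathcal B(C)$. Lemma \ref{lem:rankpreserving} gives
\[
P_{\varepsilon\mid X=x_0,Z=z}=P_{\varepsilon\mid X=R_z(x_0),Z=z'},
\]
and the same identity holds for $\varepsilon^*$. Because $Z$ is excluded from $m$, and hence from $q$, the measure-preserving map $q$ linking these conditional laws satisfies $q(x_0,\cdot)=q(R_z(x_0),\cdot)$ up to null sets. Note that this step swaps the label $z$ for $z'$ while moving the point. To iterate, I need the identity at the point $R_z(x_0)$ with label $z$ again. This is the ``vertical'' label switch, which is free here: $q$ does not depend on $z$. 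Induction then gives $q(R_z^n(x_0),\cdot)=q(x_0,\cdot)$ for all $n$.

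The main obstacle is that Lemma \ref{lem:rankpreserving} holds only for $P_z$-a.e.\ $x$, so a null exceptional set could be hit somewhere along the orbit. I would handle this by noting that $R_z$ pushes $P_z$ to $P_{z'}$, and both measures have densities bounded between $\lambda$ and $\Lambda$ (Assumption \ref{supportassprime}(ii)). Preimages of null sets under $R_z$ are therefore null. Taking the countable union of the preimages under $R_z^n$ of the exceptional set, the resulting set is still null, and outside it the whole orbit avoids exceptional points.

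\emph{Passage to the anchor.} By the definition of $\mathcal B(C)$ and Lemma \ref{lem:rankorbitaccumulation}, there is a subsequence $R_z^{n_j}(x_0)\to x^\ast\in C$. Applying \eqref{wtsq} along this subsequence yields $q(x^\ast,\cdot)=q(x_0,\cdot)$ $P_\varepsilon$-a.e. The anchor may differ across starting points, since different orbits can accumulate at different points of $C$. To close this gap I use the quotient convention: $C$ is treated as a single normalized anchor, so all points of $C$ carry the same $q$, namely the identity correspondence forced by the normalization. Combining the two facts, $q(x_0,\cdot)$ is the normalized map for a.e.\ $x_0\in\mathcal B(C)$. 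Since $\mathcal I_R$ is null by Lemma \ref{lem:rankfixedset}, this gives identification for $P_X$-a.e.\ $x\in\mathcal B(C)$ and $P_\varepsilon$-a.e.\ $\varepsilon$.
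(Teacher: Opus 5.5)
Your proposal is correct and follows essentially the same route as the paper's proof. You define the comparison map $q$, use Lemma~\ref{lem:rankpreserving} (with the $z$-independence of $q$ supplying the label switch) to get $q(x_0,\cdot)=q(R_z^n(x_0),\cdot)$ along the orbit, pass to the limit along a subsequence accumulating in $C$ via continuity in probability, and then invoke the quotient normalization of $C$, exactly as the paper does. Your added step is sound and makes precise what the paper dismisses as ``outside the usual null sets'': $R_z$ pushes $P_z$ to $P_{z'}$ and both densities lie between $\lambda$ and $\Lambda$, so preimages of null sets under each $R_z^n$ are null, and therefore almost every starting point has an orbit avoiding the exceptional set of Lemma~\ref{lem:rankpreserving}.
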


\begin{proof}
The proof follows the proof of Theorem 1 in \citet{gunsilius2023condition},
with Lemma 1 replaced by Lemma~\ref{lem:rankpreserving} and the convergence
step replaced by Lemma~\ref{lem:rankorbitaccumulation}.

Let \(m\) and \(m^\ast\) be two observationally equivalent mechanisms and
define the comparison map
\[
    q(x,\cdot):=m^{-1}\!\bigl(x,m^\ast(x,\cdot)\bigr).
\]
Fix \(x_0\in\mathcal B(C)\) outside the usual null sets and define
\[
    x_{n+1}:=R_z(x_n),
    \qquad n\geq0.
\]
By Lemma~\ref{lem:rankpreserving}, the transition from \(x_n\) to \(x_{n+1}\)
preserves the first-stage rank and hence preserves the conditional law of the
second-stage unobservable. Therefore the same uniqueness argument as in
Assumption 3 gives
\[
    q(x_0,\cdot)=q(x_1,\cdot)=\cdots=q(x_n,\cdot)
    \qquad\text{for all }n.
\]
By the definition of \(\mathcal B(C)\), there exists a subsequence
\(x_{n_j}\to x^\ast\in C\). The continuity-in-probability condition in
Assumption 3 lets us pass to the limit along this subsequence, so
\[
    q(x_0,\cdot)=q(x^\ast,\cdot).
\]
Since \(C\) is a normalized quotient piece, the comparison map is fixed at the
anchor represented by \(C\). Thus
\[
    q(x^\ast,e)=e
\]
for \(P_\varepsilon\)-almost every \(e\), where \(x^\ast\in C\) is used only as
a representative of the quotient piece. Hence
\[
    q(x_0,e)=e
\]
for \(P_\varepsilon\)-almost every \(e\), which implies
\[
    m(x_0,e)=m^\ast(x_0,e)
\]
for \(P_\varepsilon\)-almost every \(e\). Since \(x_0\) was arbitrary in
\(\mathcal B(C)\) outside a null set, the claim follows.
\end{proof}

If the support is decomposed into a countable number of quotient basins, then
we have identification almost everywhere.

\begin{figure}[t]
    \centering
    \includegraphics[width=0.8\textwidth]{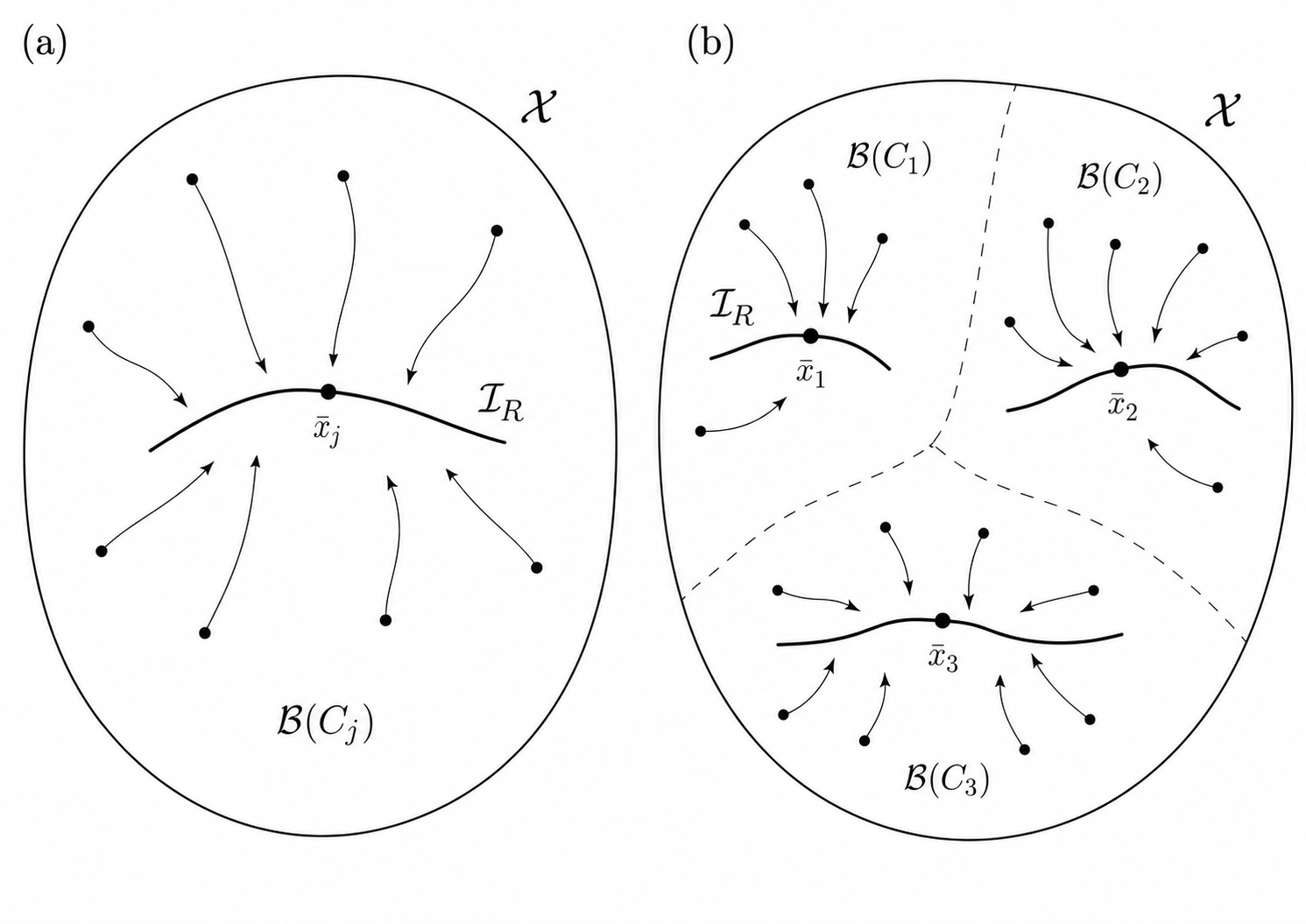}
    \caption{
    Rank-map dynamics and basin-by-basin identification.
    Panel (a) shows identification on the basin \(\mathcal B(C_j)\) of a
    normalized fixed-set piece \(C_j\subset\mathcal I_R\). Panel (b) shows the
    piecewise covering condition: if the basins \(\mathcal B(C_j)\) cover
    \(\mathcal X\setminus\mathcal I_R\) up to \(P_X\)-null sets, then the
    piecewise identification argument yields almost-everywhere identification.
    }
    \label{fig:convergence_covering}
\end{figure}

\begin{corollary}[Almost-everywhere identification under a piecewise cover]
\label{cor:piecewise_identification}
Suppose the hypotheses of Theorem~\ref{thm:basin_identification} hold. Suppose
further that there exist normalized fixed-set pieces
\[
    C_1,C_2,\ldots
    \in \mathfrak C
\]
such that
\[
    P_X\!\left(
    \mathcal X\setminus
    \left[
        \mathcal I_R\cup \bigcup_{j\geq1}\mathcal B(C_j)
    \right]
    \right)
    =
    0.
\]
Then the mechanism \(m\) is identified for \(P_X\)-almost every \(x\) and
\(P_\varepsilon\)-almost every \(\varepsilon\).
\end{corollary}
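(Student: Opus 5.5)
The plan is to deduce the corollary from Theorem~\ref{thm:basin_identification} by a countable union argument over the basins. For each $j\geq1$, Theorem~\ref{thm:basin_identification} applied to the normalized piece $C_j$ gives a $P_X$-null set $N_j\subseteq\mathcal B(C_j)$. For every $x\in\mathcal B(C_j)\setminus N_j$, it also gives a $P_\varepsilon$-null set $E_{j,x}$ with $m(x,e)=m^\ast(x,e)$ for all $e\notin E_{j,x}$. Here $m,m^\ast$ are any two observationally equivalent mechanisms in $\mathbb I$. The only point needing care is that the normalization of $C_j$ is imposed separately for each $j$. The comparison map $q$ is nevertheless a single object defined on all of $\mathcal X$, so the conclusions for different $j$ are mutually compatible. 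If two basins overlap, both arguments force $q(x,\cdot)=\mathrm{id}$ there, so there is no conflict.

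Next I will define the exceptional set
\[
N:=\Bigl(\mathcal X\setminus\bigl[\mathcal I_R\cup\textstyle\bigcup_{j\geq1}\mathcal B(C_j)\bigr]\Bigr)\cup\mathcal I_R\cup\textstyle\bigcup_{j\geq1}N_j .
\]
The first set is $P_X$-null by the covering hypothesis. The set $\mathcal I_R$ is Lebesgue-null by Lemma~\ref{lem:rankfixedset}. It is therefore $P_X$-null, because $P_X$ is a mixture of $P_z$ and $P_{z'}$, both absolutely continuous by Assumption~$4'$(ii). Each $N_j$ is $P_X$-null, and a countable union of null sets is null, so $P_X(N)=0$. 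Every $x\in\mathcal X\setminus N$ lies in some $\mathcal B(C_j)\setminus N_j$, and for such $x$ we get $m(x,\cdot)=m^\ast(x,\cdot)$ $P_\varepsilon$-a.e.

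Finally, I will upgrade this pointwise-in-$x$ statement to a joint statement for $P_X\otimes P_\varepsilon$ if desired. The set $\{(x,e):m(x,e)\neq m^\ast(x,e)\}$ is measurable, since $m$ and $m^\ast$ are measurable. Its $x$-sections are $P_\varepsilon$-null off $N$, so Fubini--Tonelli shows that it is null for the product measure. This is exactly the claimed almost-everywhere identification.

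The main obstacle is bookkeeping rather than analysis. One must make sure the null sets supplied by Theorem~\ref{thm:basin_identification} can be chosen independently of the particular pair $(m,m^\ast)$, or at least fixed once that pair is fixed. One must also check that countability is used essentially, since an uncountable family of basins would break the union bound. I would handle the first point by fixing the pair $(m,m^\ast)$ at the outset, before choosing any null sets.
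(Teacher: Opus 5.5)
Your proposal is correct and is essentially the paper's proof: you apply Theorem~\ref{thm:basin_identification} on each basin $\mathcal B(C_j)$, use Lemma~\ref{lem:rankfixedset} together with absolute continuity of $P_X$ (as a mixture of $P_z$ and $P_{z'}$) to get that $\mathcal I_R$ is $P_X$-null, and take the countable union of the exceptional sets. Your additional points are harmless refinements that the paper leaves implicit: fixing the pair $(m,m^\ast)$ first, consistency on overlapping basins, and the Fubini upgrade to a $P_X\otimes P_\varepsilon$-null set, which tacitly needs joint measurability of $m$ and $m^\ast$.
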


\begin{proof}
By Lemma~\ref{lem:rankfixedset}, \(\mathcal I_R\) has Lebesgue measure zero.
Since \(P_X\) is absolutely continuous, \(P_X(\mathcal I_R)=0\). The claim then
follows by applying Theorem~\ref{thm:basin_identification} on each basin
\(\mathcal B(C_j)\) and taking a countable union of measure-zero exceptional
sets.
\end{proof}

\begin{remark}[Piecewise nature of the argument]
Corollary \ref{cor:piecewise_identification} makes explicit the piecewise
nature of the identification argument. Assumption \(4'\) guarantees that the
rank fixed set \(\mathcal I_R\) is lower-dimensional and therefore
\(P_X\)-null. The rank-map iteration identifies the structural function on the
basin of each normalized quotient piece of \(\mathcal I_R\). The quotienting
reflects that the proof does not keep track of individual points inside the
lower-dimensional fixed set; each fixed-set piece is used only as an anchor for
its basin. If these basins cover the support up to a null set, then the usual
almost-everywhere identification conclusion follows. This is the same structure
as in the one-dimensional argument of \citet{torgovitsky2015identification},
where several points of intersection come with different basins of attraction,
and in the original proof of Theorem 1 in \citet{gunsilius2023condition},
where several manifolds can exist too. In applications one would typically
expect only finitely many such basins; the corollary allows countably many
because that is all that is needed for the null-set argument.
\end{remark}

\section{Scope of the corrected condition}

Assumption \(4'\) replaces the quasi-concavity and CDF-level-set intersection
condition in Assumption 4 of \citet{gunsilius2023condition} by conditions on
the inverse Brenier rank map. The corrected condition applies to pairs of
conditional distributions \(P_z\) and \(P_{z'}\) with common compact uniformly
convex support, H\"older-continuous densities bounded away from zero and
infinity, and a lower-dimensional fixed set of the inverse Brenier rank map.

This class is different from the quasi-concave class used in the original
paper. In particular, the densities need not be unimodal or quasi-concave, and
the multivariate distribution functions need not have convex upper level sets.
The relevant object is no longer the intersection of CDF level sets, but the
rank fixed set
\[
    \mathcal I_R
    =
    \{x\in\mathcal X:R_z(x)=x\}.
\]
When the constant-rank condition in Assumption \(4'\) holds, this set is
lower-dimensional and hence has Lebesgue measure zero.

The condition should therefore be understood as a regularity condition on the
pair \((P_z,P_{z'})\) and as the rank-map analogue of the transversal
intersection requirement in the original article. The smoothness and boundedness
assumptions are standard conditions under which Caffarelli regularity gives a
continuously differentiable inverse Brenier rank map. The constant-rank
condition is the corresponding nondegeneracy condition ensuring that the rank
fixed set is lower-dimensional.

The constant-rank condition in Assumption \(4'\) is a nondegeneracy condition
on the fixed points of the inverse Brenier rank map. In smooth
finite-dimensional parametric families
\[
    (P^{\theta}_z,P_{z'}^\theta),
    \qquad
    \theta\in\Theta\subset\mathbb R^d,
\]
the fixed-set condition is implied by the usual parametric transversality
condition. Namely, if \(\theta\mapsto R_z^\theta\) is sufficiently smooth and
the joint map
\[
    F(\theta,x):=R_z^\theta(x)-x
\]
is transverse to \(0\), i.e.
\[
    \operatorname{rank}D_{(\theta,x)}F(\theta,x)=k
    \quad\text{whenever }F(\theta,x)=0,
\]
then the parametric transversality theorem implies that, for generic
\(\theta\), the map \(x\mapsto R_z^\theta(x)-x\) is transverse to \(0\).
Consequently the rank fixed set is finite, hence lower-dimensional and of
Lebesgue measure zero.

In the one-dimensional setting of \citet{torgovitsky2015identification}, the
inverse Brenier rank map is the ordinary conditional rank transformation. Under
the normalization \(h(z',u)=u\), the inverse rank map is
\[
    R_z(x)
    =
    F_{X\mid Z=z'}^{-1}\!\left(F_{X\mid Z=z}(x)\right).
\]
Thus the rank fixed set condition
\[
    R_z(x)=x
\]
is equivalent to
\[
    F_{X\mid Z=z}(x)=F_{X\mid Z=z'}(x),
\]
which is exactly the crossing condition used in the scalar argument. The
labeled iteration above is therefore the multivariate rank-map analogue of the
horizontal rank-preserving move and vertical label switch in
\citet{torgovitsky2015identification}.

\section{Conclusion}

This note corrects the proof of Lemma 1 in
\citet{gunsilius2023condition}. The original argument relied on a
level-preserving projection map, but level preservation is not equivalent to
measure preservation for Brenier maps. The corrected argument replaces distribution-function
level sets by inverse Brenier maps, equivalently by multivariate rank maps.

Under Assumption \(4'\), the inverse Brenier rank map is sufficiently regular,
its fixed set is lower-dimensional, and the rank-map iteration provides
identification outside this fixed set, basin by basin. Since the fixed set has
Lebesgue measure zero under the constant-rank assumption, the
almost-everywhere identification conclusion of Theorem 1 in
\citet{gunsilius2023condition} is recovered under the piecewise covering
condition above. The corrected condition is also significantly more flexible than the
original quasi-concavity condition: it allows for smooth distributions with
non-quasi-concave and multimodal densities, provided the associated rank fixed
set satisfies the stated nondegeneracy condition. Moreover, the condition is generalically satisfied for smooth parametric classes of distributions.

Assumption~\(4'\) is sufficient but not necessary. There are natural settings,
such as pairs of multivariate Gaussian distributions with unbounded support,
where the rank fixed set and the associated dynamics can be analyzed directly
through the spectrum of the affine inverse Brenier map, even though the
Caffarelli regularity conditions in Assumption~\(4'\) do not apply. A
systematic relaxation of Assumption~\(4'\) along these lines is left for future
work.

\bibliography{references}

\end{document}